\documentclass[11pt]{article}

%
\usepackage{fullpage}
\usepackage{times}
\usepackage{amssymb, amsmath}
\usepackage{color}
\usepackage{graphicx}
\usepackage{verbatim} 
\usepackage{url}

\newtheorem{theorem}{Theorem}
\newtheorem{lemma}{Lemma}

\newtheorem{corollary}{Corollary}

\newtheorem{remark}{Remark}
\newtheorem{conjecture}{Conjecture}

\newcommand{\eps}{\varepsilon}

\newcommand{\E}{\mathsf{E}}

\renewcommand{\Pr}{\mathsf{Pr}}

\newcommand{\abs}[1]{\left| #1 \right|}

\newcommand{\OR}{\mathrm{OR}}

\newcommand{\DISJ}{{$2$-DISJ$^r$}}
\newcommand{\DISJP}{{$2$-DISJ$^h$}}
\newcommand{\DISJE}{{$2$-DISJ$^{m}$}}

\newcommand{\orDISJ}{{OR-DISJ$^r$}}

\newcommand{\thresh}{{THRESH$_\theta^r$}}
\newcommand{\threshR}{{THRESH$_{(3r-1)/4}^r$}}
\newcommand{\threshP}{{THRESH$_{(3n-1)/4}^n$}}
\newcommand{\threshM}{{THRESH$_{(3r-1)/4}^r$ where $r = \min\{m/k, n\}$}}
\newcommand{\threshE}{{THRESH$_{(3m-1)/4}^m$}}

\newcommand{\qedsymb}{\hfill{\rule{2mm}{2mm}}}
\def\FullBox{\hbox{\vrule width 8pt height 8pt depth 0pt}}
\def\qed{\ifmmode\qquad\FullBox\else{\unskip\nobreak\hfil
\penalty50\hskip1em\null\nobreak\hfil\FullBox
\parfillskip=0pt\finalhyphendemerits=0\endgraf}\fi}

\newenvironment{proof}{\begin{trivlist}
\item[\hspace{\labelsep}{\bf\noindent Proof: }]
}{\qedsymb\end{trivlist}}

\begin{document}
\title{When Distributed Computation is Communication Expensive}
\author{David P. Woodruff\\IBM Research Almaden
\\dpwoodru@us.ibm.com
\and Qin Zhang~\footnote{Most of this work was done when the author was at IBM Research Almaden.}\\Indiana University Bloomington\\qinzhang@cse.ust.hk}

\date{}

\maketitle

\begin{abstract}
We consider a number of fundamental statistical and graph problems in the message-passing model, where we have $k$ machines (sites), each holding a piece of data, and the machines want to jointly solve a problem defined on the union of the $k$ data sets. The communication is point-to-point, and the goal is to minimize the total communication among the $k$ machines. This model captures all point-to-point distributed computational models with respect to minimizing communication costs. Our analysis shows that exact computation of many statistical and graph problems in this distributed setting requires a prohibitively large amount of communication, and often one cannot improve upon the communication of the simple protocol in which all machines send their data to a centralized server. Thus, in order to obtain protocols that are communication-efficient, one has to allow approximation, or investigate the distribution or layout of the data sets.
\end{abstract}


\section{Introduction}
\label{sec:intro}
Recent years have witnessed a spectacular increase in the amount of data being collected and processed in various applications. In many of these applications, data is often distributed across a group of machines, referred to as {\em sites} in this paper, which are connected by a communication network. These sites jointly compute a function defined on the union of the data sets by exchanging messages with each other. For example, consider the following scenarios.
\begin{enumerate}
\item
We have a set of network routers, each observing a portion of the network, and periodically they want to compute some functions defined on the global network which can be used to determine the overall condition/health of the network. Concrete functions include the number of distinct source IP addresses, the set of most frequent destination IP addresses, etc.

\item The massive social network graphs are usually stored in many sites, and those graphs are keeping changing.  To answer queries such as whether the whole graph is connected, or whether the graph exhibit a particular property (e.g., bipartiteness, cycle-freeness), we have to synthesize data from all the sites.
\end{enumerate}

In distributed computational models for big data, besides traditional measurement like local CPU processing time and the number of disk accesses, we are also interested in minimizing two other objectives, namely, the {\em communication cost} and the {\em round complexity}. The communication cost, which we shall also refer to as the communication complexity, denotes the total number of bits exchanged in all messages across the sites during a computation. The round complexity refers to the number of communication rounds needed for the computation, given various constraints on what messages can be sent by each site in each round.


The communication cost is a fundamental measure since communication is often the bottleneck of applications (e.g., applications mentioned above), and so it directly relates to energy consumption, network bandwidth usage, and overall running time. The round complexity is critical when the computation is partitioned into rounds and the initialization of each round requires a large overhead. In this paper we will focus on the communication complexity, and analyze problems in an abstract model called the message-passing model (see the definition in Section~\ref{sec:model}) that captures all models for point-to-point distributed computation in terms of their communication costs. In particular, our lower bound results hold even if the communication protocol sends only a single bit in each message, and each site has an unbounded amount of local memory and computational power. Note that this means our lower bounds are as strong as possible, not requiring any assumptions on the local computational power of the machines. We also present several upper bounds, all of which are also locally computationally efficient, meaning the protocols we present do not need extra memory beyond what is required to accommodate the input. We will briefly discuss the issue of round-efficiency in Section~\ref{sec:discussion}.

Common sources of massive data include numerical data, e.g., IP streams and logs of queries to a search engine, as well as graph data, e.g., web graphs, social networks, and citation graphs. In this paper we investigate the communication costs for solving several basic statistical and graph problems in the message-passing model. Solving these problems is a minimal requirement of protocols seeking to solve more complicated functions on distributed data. 

We show that if we want to solve many of these problems exactly, then there are no better solutions than the almost trivial ones, which are usually quite communication-inefficient. The motivation of this work is thus to deliver the following message to people working on designing protocols for solving problems on distributed systems: for many statistical and graph problems in the distributed setting, if we want efficient communication protocols, then we need to consider the following relaxations to the original problem:
\begin{enumerate}
\item Allow for returning an approximate solution. Here, approximation can be defined as follows: for a problem whose output is a single numerical value $x$, allowing an approximation means that the protocol is allowed to return any value $\tilde{x}$ for which $\tilde{x} \in [(1-\eps)x, (1+\eps)x]$, for some small user-specified parameter $\eps > 0$. For a problem whose output is YES or NO, e.g., a problem deciding if a certain property of the input exists or not, we could instead allow the protocol to return YES if the input is close to having the property (under some problem-specific notion of closeness) and NO if the input is far from having that property. For example, in the graph connectivity problem, we return YES if the graph can be made connected by adding a small number of edges, while we return NO if the graph requires adding a large number of edges to be made connected. This latter notion of approximation coincides with the {\em property testing} paradigm \cite{GGR98} in the computer science literature.  

By allowing certain approximations we can sometimes drastically reduce the communication costs. Concrete examples and case studies will be given in Section~\ref{sec:F0} and Section~\ref{sec:diameter}.


\item Use well-designed input layouts. Here are two examples: (1) All edges from the same node are stored in the same site or on only a few sites. In our lower bounds the edges adjacent to a node are typically stored across many different sites. (2) Each edge is stored on a unique (or a small number) of different sites. Our results in Table~\ref{tab:results} show that whether or not the input graph has edges that occur on multiple sites can make a huge difference in the communication costs. 

\item Explore prior distributional properties of the input dataset, e.g., if the dataset is skewed, or the underlying graph is sparse or follows a power-law distribution. Instead of developing algorithms targeting the worst-case distributions, as those used in our lower bounds, if one is fortunate enough to have a reasonable model of the underlying distribution of inputs, this can considerably reduce communication costs. An extreme example is that of a graph on $n$ vertices - if the graph is completely random, meaning, each possible edge appears independently with probability $p$, then the $k$ sites can simply compute the total number of edges $m$ to decide whether or not the input graph is connected with high probability. Indeed, by results in random graph theory, if $m \ge 0.51n\log n$ then the graph is connected with very high probability, while if $m \le 0.49n\log n$ then the graph is disconnected with very high probability~\cite{ER60}. Of course, completely random graphs are unlikely to appear in practice, though other distributional assumptions may also result in more tractable problems. 
\end{enumerate}

\subsection{The Message-Passing model}
\label{sec:model}
In this paper we consider the message-passing model, studied, for example, in \cite{PVZ12,WZ12}. In this model we have $k$ sites, e.g., machines, sensors, database servers, etc., which we denote as $P_1, \ldots, P_k$. Each site has some portion of the overall data set, and the sites would like to compute a function defined on the union of the $k$ data sets by exchanging messages. There is a two-way communication channel between all pairs of players $P_i$ and $P_j$. Then, since we will prove lower bounds, our lower bounds also hold for topologies in which each player can only talk to a subset of other players. The communication is point-to-point, that is, if $P_i$ talks to $P_j$, then the other $k - 2$ sites do not see the messages exchanged between $P_i$ and $P_j$. At the end of the computation, at least one of the sites should report the correct answer. The goal is to minimize the total number of bits and messages exchanged among the $k$ sites. For the purposes of proving impossibility results, i.e., lower bounds, we can allow each site to have an infinite local memory and infinite computational power; note that such an assumption will only make our lower bounds stronger. Further, we do not place any constraints on the format of messages or any ordering requirement on the communication, as long as it is point-to-point. 

The message-passing model captures all point-to-point distributed communication models in terms of the communication cost, including the BSP model by Valiant~\cite{Valiant90}, the $\mathcal{M}\mathcal{R}\mathcal{C}$ MapReduce framework proposed by Karloff et al.~\cite{KSV10}, the generic MapReduce model by Goodrich et al.~\cite{GSZ11}, and the Massively Parallel model by Koutris and Suciu~\cite{KS11}.

\begin{remark}
We comment that in some settings, where the primary goal is to parallelize a single computation on a big dataset, communication may not be the only bottleneck; CPU time and disk accesses are also important. 
However, in this paper we are mainly interested in the following setting: The data has already been distributed to the sites, and perhaps keeps changing. The goal is to periodically compute some function that is defined on the dataset (e.g., queries). In this setting, communication is usually the most expensive operation, since it directly connects to network bandwidth usage and energy consumption.
\end{remark}

\subsection{Our Results}
We investigate lower bounds (impossibility results) and upper bounds (protocols) of the exact computation of the following basic statistical and graph problems in the message-passing model.
\begin{enumerate}
\item Statistical problems: computing the number of distinct elements, known as $F_0$ in the database literature; and finding the element with the maximum frequency, known as the $\ell_\infty$ or iceberg query problem. 
We note that the lower bound for $\ell_\infty$ also applies to the heavy-hitter problem of finding all elements whose frequencies exceed a certain threshold, as well as many other statistical problems for which we have to compute the elements with the maximum frequency exactly.

\item Graph problems: computing the degree of a vertex; testing cycle-freeness; testing connectivity; computing the number of connected components (\#CC); testing bipartiteness; and testing triangles-freeness. 
\end{enumerate}
For each graph problem, we study its lower bound and upper bound in two cases: with edge duplication among the different sites and without edge duplication. Our results are summarized in Table~\ref{tab:results}. Note that all lower bounds are matched by upper bounds up to some logarithmic factors. For convenience, we use $\tilde{\Omega}(f)$ and $\tilde{O}(f)$ to denote functions of forms $f / \log^{O(1)}(f)$ and $f \cdot \log^{O(1)}(f)$, respectively. That is, we hide logarithmic factors.

\begin{table*}[t!]
\centering
\scalebox{1}{
\begin{tabular}{|c|c|c|c|c|}
\hline
& \multicolumn{2}{|c|}{With duplication} & \multicolumn{2}{|c|}{Without duplication}\\
\hline
Problem & LB  &  UB & LB & UB \\
\hline
$F_0$   & $\tilde{\Omega}(k F_0)$ & $\tilde{O}(k (F_0 + \log n))$ & -- & -- \\ 
$\ell_\infty$  & $\tilde{\Omega}(\min\{k, \ell_\infty\} n)$ & $\tilde{O}(\min\{k, \ell_\infty\} n)$ & -- & -- \\
degree & $\tilde{\Omega}(k d_v)$ & $O(k d_v \log n)$ & $\tilde{\Omega}(k)$ & $O(k \log n)$  \\
cycle-freeness & $\tilde{\Omega}(kn)$ & $\tilde{O}(kn)$ & ${\Omega}(\min\{n, m\})$ & $\tilde{O}(\min\{n,m\})$  \\
connectivity & $\tilde{\Omega}(kn)$ & $\tilde{O}(kn)$ & $\tilde{\Omega}(kr)$ & $\tilde{O}(kr)$  \\
\#CC & $\tilde{\Omega}(kn)$ & $\tilde{O}(kn)$ & $\tilde{\Omega}(kr)$ & $\tilde{O}(kr)$  \\
bipartiteness & $\tilde{\Omega}(kn)$ & $\tilde{O}(kn)$ & $\tilde{\Omega}(kr)$ & $\tilde{O}(kr)$  \\
triangle-freeness & $\tilde{\Omega}(km)$ & $\tilde{O}(km)$ & ${\Omega}(m)$ & $\tilde{O}(m)$  \\
\hline
\end{tabular}
}
\caption{
All results are in terms of number of bits of communication. Our lower bounds hold for randomized protocols which succeed with at least a constant probability of $2/3$, while all of our upper bounds are deterministic protocols (which always succeed). $k$ refers to the number of sites, with a typical value ranging from $100$ to $10000$ in practice. For $F_0$ and $\ell_\infty$, $n$ denotes the size of the element universe. For graph problems, $n$ denotes the number of vertices and $m$ denotes the number of edges. $d_v$ is the degree of the queried vertex $v$. 
We make the mild assumption that $\Omega(\log n) \le k \le \min\{n, m\}$. Let $r = \min\{n, m/k\}$. Except for the upper bound for cycle-freeness in the ``without duplication" case, for which $m \ge n$ implies that a cycle necessarily exists (and therefore makes the problem statement vacuous), we assume that $m \ge n$ in order to avoid a messy and uninteresting case-by-case analysis.}
\label{tab:results}
\end{table*}

We prove most of our lower bound results via reductions from a meta-problem that we call \thresh\ . Its definition is given in Section~\ref{sec:thresh}. 

In Section \ref{sec:diameter} we make a conjecture on the lower bound for the diameter problem, i.e., the problem of computing the distance of the farthest pair of vertices in a graph. This problem is one of the few problems that we cannot completely characterize by the technique proposed in this paper. We further show that by allowing an error as small as an additive-$2$, we can reduce the communication cost of computing the diameter by roughly a $\sqrt{n}$ factor, compared with the naive algorithm for exact computation. This further supports our claim that even a very slight approximation can result in a dramatic savings in communication. 

\subsection{Related Work}
\label{sec:related-work}
For statistical problems, a number of approximation algorithms have been proposed recently in the {\em distributed streaming} model, which can be thought of as a dynamic version of the one-shot distributed computation model considered in this paper: the $k$ local inputs arrive in the streaming fashion and one of the sites has to continuously monitor a function defined on the union of the $k$ local inputs. All protocols in the distributed streaming model are also valid protocols in our one-shot computational model, while our impossibility results in our one-shot computational model also apply to all protocols in the distributed streaming model. Example functions studied in the distributed streaming model include $F_0$ \cite{CMY11}, $F_2$ (size of self join) \cite{CMY11,WZ12}, quantile and heavy-hitters \cite{HYZ12}, and the empirical entropy~\cite{ABC09}. All of these problems have much lower communication cost if one allows an approximation of the output number $x$ in a range $[(1-\eps)x, (1+\eps)x]$, as mentioned above (the definition as to what $\eps$ is for the various problems differs). These works show that if an approximation is allowed, then all these problems can be solved using only $\tilde{O}(k/\eps^{O(1)})$ bits of communication. A suite of (almost) matching lower bounds for approximate computations was developed in \cite{WZ12}. For exact $F_0$ computation, the best previous communication cost lower bound was $\Omega(F_0+k)$ bits. In this paper we improve the communication cost lower bound to $\tilde{\Omega}(k F_0)$, which is optimal up to a small logarithmic factor. 


For graph problems, Ahn, Guha and McGregor~\cite{AGM12a,AGM12b} developed an elegant technique for {\it sketching} graphs, and showed its applicability to many graph problems including connectivity, bipartiteness, and minimum spanning tree. Each sketching step in these algorithms can be implemented in the message-passing model as follows: each site computes a sketch of its local graph and sends its sketch to $P_1$. The site $P_1$ then combines these $k$ sketches into a sketch of the global graph. The final answer can be obtained based on the global sketch that $P_1$ computes. Most sketches in \cite{AGM12a,AGM12b} are of size $\tilde{O}(n^{1+\gamma})$ bits (for a small constant $\gamma \ge 0$), and the number of sketching steps varies from $1$ to a constant. Thus direct implementations of these algorithms in the message-passing model have communication $\tilde{O}(k \cdot n^{1+\gamma})$ bits. On the lower bound side, it seems not much is known. Phillips et al. \cite{PVZ12} proved an $\Omega(kn/\log^2 k)$ bits lower bound for connectivity. Their lower bound proof relies on a well-crafted graph distribution. In this paper we improve their lower bound by a factor of $\log k$. Another difference is that their proof requires the input to have edge duplications, while our lower bound holds even if there are no edge duplications, showing that the problem is hard even if each edge occurs on a single site. Very recently in an unpublished manuscript, Huang et. al. \cite{HRVZ13} showed that $\Omega(kn)$ bits of communication is necessary in order to even compute a constant factor approximation to the size of the maximum matching of a graph. Their result, however, requires that the entire matching has to be reported, and it is unknown if a similar lower bound applies if one is only interested in estimating the matching size. 

Besides statistical and graph problems, Koutris and Suciu \cite{KS11} studied evaluating conjunctive queries in their massively parallel model. Their lower bounds are restricted to one round of communication, and the message format has to be tuple-based, etc. Some of these assumptions are removed in a recent work by Beame et al.~\cite{BKS13}.
We stress that in our message-passing model there is no such restriction on the number of rounds and the message format; our lower bounds apply to {\it arbitrary} communication protocols. Recently, Daum\'e III et al. \cite{DPSV12a,DPSV12b} and Balcan et al. \cite{BBFM12} studied several problems in the setting of distributed learning, in the message-passing model. 


\subsection{Conventions}
Let $[n] = \{1, \ldots, n\}$. All logarithms are base-$2$.  All communication complexities are in terms of bits. We typically use capital letters $X, Y, \ldots$ for sets or random variables, and lower case letters $x, y, \ldots$ for specific values of the random variables $X, Y, \ldots$. We write $X \sim \mu$ to denote a random variable chosen from distribution $\mu$. For convenience we often identify a set $X \subseteq [n]$ with its characteristic vector when there is no confusion, i.e., the bit vector which is $1$ in the $i$-th bit if and only if element $i$ occurs in the set $X$. 

All our upper bound protocols are either deterministic or only using private randomness. 

We make a mild assumption that $\Omega(\log n) \le k \le \min\{n, m\}$, where for $F_0$ and $\ell_\infty$, $n$ denotes the size of the element universe; and for graph problems, $n$ denotes the number of vertices and $m$ denotes the number of edges.


\subsection{Roadmap}
In Section~\ref{sec:F0}, we give a case study on the number of distinct elements ($F_0$) problem. In Section~\ref{sec:prelim}, we include background on communication complexity which is needed for understanding the rest of the paper.
In Section~\ref{sec:thresh}, we introduce the meta-problem \thresh\ and study its communication complexity. In Section~\ref{sec:stat} and Section~\ref{sec:graph}, we show how to prove lower bounds for a set of statistical and graph problems by performing reductions from \thresh. We conclude the paper in Section~\ref{sec:discussion}. 

\section{The Number of Distinct Elements: A Case Study}
\label{sec:F0}
In this section we give a brief case study on the number of distinct elements ($F_0$) problem, with the purpose of justifying the statement that approximation is often needed in order to obtain communication-efficient protocols in the distributed setting.

The $F_0$ problem requires computing the number of distinct elements of a data set. It has numerous applications in network traffic monitoring~\cite{EVF06}, data mining in graph databases~\cite{PGF02}, data integration~\cite{BHMPRS05}, etc., and has been extensively studied in the last three decades, mainly in the data stream model. It began with the work of Flajolet and Martin \cite{flajolet85:_probab} and culminated in an optimal algorithm by Kane et al. \cite{Kane10}. In the streaming setting, we see a stream of elements coming one at a time and the goal is to compute the number of distinct elements in the stream using as little memory as possible. In \cite{flajolet2008}, Flajolet et al. reported that their HyperLogLog algorithm can estimate cardinalities beyond $10^9$ using a memory of only $1.5$KB, and achieve a relative accuracy of $2\%$, compared with the $10^9$ bytes of memory required if we want to compute $F_0$ exactly.

Similar situations happen in the distributed communication setting, where we have $k$ sites, each holding a set of elements from the universe $[n]$, and the sites want to compute the number of distinct elements of the union of their $k$ data sets. In \cite{CMY11}, a $(1+\eps)$-approximation algorithm (protocol) with $O(k (\log n + 1/\eps^2 \log 1/\eps))$ bits of communication was given in the distributed streaming model, which is also a protocol in the message-passing model. In a typical setting, we could have $\eps = 0.01$, $n = 10^9$ and $k = 1000$, in which case the communication cost is about $6.6 \times 10^7$ bits~\footnote{In the comparison we neglect the constants hidden in the big-$O$ and big-$\Omega$ notation which should be small.}. On the other hand, our result shows that if exact computation is required, then the communication cost among the $k$ sites needs to be at least be $\Omega(k F_0 / \log k)$ (See Corollary~\ref{cor:F0}), which is already $10^{9}$ bits even when $F_0 = n/100$.

\section{Preliminaries}
\label{sec:prelim}


In this section we introduce some background on communication complexity. We refer the reader to the book by Kushilevitz and Nisan \cite{KN97} for a more complete treatment.

In the basic two-party communication complexity model, we have two parties (also called sites or players), which we denote by Alice and Bob. Alice has an input $x$ and Bob has an input $y$, and they want to jointly compute a function $f(x,y)$ by communicating with each other according to a protocol $\Pi$. Let $\Pi(x, y)$ be the transcript of the protocol, that is, the concatenation of the sequence of messages exchanged by Alice and Bob, given the inputs $x$ and $y$. In this paper when there is no confusion, we abuse notation by using $\Pi$ for both a protocol and its transcript, and we further abbreviate the transcript $\Pi(x,y)$ by $\Pi$.

The {\em deterministic communication complexity} of a deterministic protocol is defined to be \\
$\max\{\abs{\Pi(x,y)}\ |\ \textrm{all possible inputs $(x,y)$}\}$, where $\abs{\Pi(x,y)}$ is the number of bits in the transcript of the protocol $\Pi$ on inputs $x$ and $y$. The {\em randomized communication complexity} of a randomized protocol $\Pi$ is the maximum number of bits in the transcript of the protocol over all possible inputs $x,y$, together with all possible random tapes of the players. We say a randomized protocol $\Pi$ computes a function $f$ correctly with error probability $\delta$ if for all input pairs $(x,y)$, it holds that $\Pr[\Pi(x,y) \neq f(x,y)] \le \delta$, where the probability is taken only over the random tapes of the players. The randomized {\em $\delta$-error communication complexity} of a function $f$, denoted by $R^{\delta}(f)$, is the minimum communication complexity of a protocol that computes $f$ with error probability at most $\delta$.

Let $\mu$ be a distribution over the input domain, and let $(X, Y) \sim \mu$. For a deterministic protocol $\Pi$, we say that $\Pi$ computes $f$ with error probability $\delta$ on $\mu$ if $\Pr[\Pi(X,Y) \neq f(X,Y)]  \le \delta$, where the probability is over the choices of $(X,Y) \sim \mu$. The {\em $\delta$-error $\mu$-distributional communication complexity} of $f$, denoted by $D^{\delta}_{\mu}(f)$, is the minimum worst-case communication complexity of a deterministic protocol that gives the correct answer for $f$ on at least $(1 - \delta)$ fraction of all inputs (weighted by $\mu$). We denote ${\sf ED}^{\delta}_{\mu}(f)$ to be the {\em $\delta$-error $\mu$-distributional  expected  communication complexity}, which is define to be the minimum expected cost (rather than the worst-case cost) of a deterministic protocol that gives the correct answer for $f$ on at least $(1 - \delta)$ fraction of all inputs (weighted by $\mu$), where the expectation is taken over distribution $\mu$.

We can generalize the two-party communication complexity to the multi-party setting, which is the message-passing model considered in this paper. Here we have $k$ players (also called sites) $P_1, \ldots, P_k$ with $P_j$ having the input $x_j$, and the players want to compute a function $f(x_1, \ldots, x_k)$ of their joint inputs by exchanging messages with each other. 
The transcript of a protocol always specifies which player speaks next. 
In this paper the communication is point-to-point, that is, if $P_i$ talks to $P_{j}$, the other players do not see the messages sent from $P_i$ to $P_j$. At the end of the communication, only one player needs to output the answer. 

The following lemma shows that randomized communication complexity is lower bounded by distributional communication complexity under any distribution $\mu$. We include a proof in Appendix~\ref{sec:proof-Yao}, since the original proof is for the two-party communication setting.
\begin{lemma}[Yao's Lemma \cite{yao77}]
\label{lem:Yao}
For any function $f$ and any $\delta > 0$,
$R^{\delta}(f) \geq \max_{\mu} D^{\delta}_\mu(f).$
\end{lemma}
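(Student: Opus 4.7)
The plan is to follow the classical averaging argument behind Yao's minimax principle, adapted to the multi-party message-passing setting. First I would fix an arbitrary distribution $\mu$ over the joint input space $(x_1,\ldots,x_k)$, and let $\Pi$ be an optimal randomized protocol for $f$ with worst-case error $\delta$, so that the worst-case transcript length over all inputs and all random tapes equals $R^{\delta}(f)$. The key conceptual step, which is essentially identical to the two-party case, is to view $\Pi$ as a distribution over deterministic protocols $\{\Pi_r\}_r$, where $r$ ranges over the joint random tapes of the $k$ players. In the multi-party case the players may each hold a private random tape, but I can simply let $r$ denote the tuple of all of their tapes; conditioned on any fixing of $r$ the behavior of every player, and hence the entire transcript, is deterministic.

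Next I would transfer the worst-case error guarantee on inputs into an average-case error guarantee under $\mu$. Since $\Pr_r[\Pi_r(x_1,\ldots,x_k)\neq f(x_1,\ldots,x_k)]\le \delta$ for every fixed input, taking expectation over $(X_1,\ldots,X_k)\sim \mu$ gives
\[
\E_{(X_1,\ldots,X_k)\sim \mu}\Pr_r[\Pi_r(X_1,\ldots,X_k)\neq f(X_1,\ldots,X_k)]\le \delta.
\]
Swapping the order of expectation (Fubini), which is valid because the indicator of ``protocol errs'' is nonnegative, I would obtain
\[
\E_r \Pr_{(X_1,\ldots,X_k)\sim \mu}[\Pi_r(X_1,\ldots,X_k)\neq f(X_1,\ldots,X_k)]\le \delta.
\]
By an averaging argument, there exists a particular choice $r^{*}$ such that $\Pi_{r^{*}}$ is a deterministic protocol whose error under $\mu$ is at most $\delta$.

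Finally, I would bound the communication cost of this deterministic $\Pi_{r^{*}}$. Because the randomized complexity is the maximum transcript length over all inputs and all random tapes, for every fixing of $r$ the deterministic protocol $\Pi_r$ already has worst-case communication at most $R^{\delta}(f)$. In particular $\Pi_{r^{*}}$ has worst-case cost at most $R^{\delta}(f)$, so $D^{\delta}_{\mu}(f)\le R^{\delta}(f)$. Since $\mu$ was arbitrary, taking the maximum over $\mu$ yields the claimed inequality.

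The argument is essentially bookkeeping, so I do not anticipate a real obstacle; the only point that needs mild care compared with the textbook two-party proof is making sure the ``distribution over deterministic protocols'' view is legitimate when each of the $k$ players has its own private random tape and the communication pattern (who speaks next) may itself depend on the transcript. This is handled by packaging all private tapes into a single random variable $r$ and observing that, conditional on $r$, the entire message-passing protocol---including the order of speakers and all messages---is a deterministic function of the inputs, so $\Pi_r$ is a bona fide deterministic protocol in the message-passing model.
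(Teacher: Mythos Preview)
Your proof is correct and follows essentially the same averaging argument as the paper's own proof: start from the worst-case error guarantee, average over inputs drawn from $\mu$, swap expectations, and fix the random tapes to extract a deterministic protocol with error at most $\delta$ under $\mu$ and worst-case cost at most $R^{\delta}(f)$. The paper's version is simply a more compressed write-up of the same three steps, with the multi-party adaptation handled exactly as you describe (bundling all private tapes into a single random string).
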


Therefore, one way to prove a lower bound on the randomized communication complexity of $f$ is to first pick a (hard) input distribution $\mu$ for $f$, and then study its distributional communication complexity under $\mu$.

Note that given a $1/3$-error randomized protocol for a problem $f$ whose output is $0$ or $1$, we can always run the protocol $C\log(1/\delta)$ times using independent randomness each time, and then output the majority of the outcomes. By a standard Chernoff bound (see below), the output will be correct with error probability at most $e^{-\kappa C \log(1/\delta)}$ for an absolute constant $\kappa$, which is at most $\delta$ if we choose $C$ to be a sufficiently large constant. Therefore $R^{1/3}(f) = \Omega(R^{\delta}(f)/\log(1/\delta)) = \Omega(\max_\mu D^{\delta}_\mu(f)/\log(1/\delta))$ for any $\delta \in (0, 1/3]$. Consequently, to prove a lower bound on $R^{1/3}(f)$ we only need to prove a lower bound on the distributional communication complexity of $f$ with an error probability $\delta \le 1/3$.
\medskip

\noindent{\bf Chernoff bound.} Let $X_1, \ldots, X_n$ be independent Bernoulli random variables such that $\Pr[X_i = 1] = p_i$. Let $X = \sum_{i \in [n]} X_i$. Let $\mu = \E[X]$. It holds that $\Pr[X \ge (1+\delta)\mu] \le e^{-\delta^2\mu/3}$ and $\Pr[X \le (1-\delta)\mu] \le e^{-\delta^2\mu/2}$ for any $\delta \in (0,1)$.



\section{A Meta-Problem}
\label{sec:thresh}

\vspace{-0.4cm}
In this section we discuss a meta-problem \thresh\ and we derive a communication lower bound for it. 
This meta-problem will be used to derive lower bounds for statistical and graph problems in our applications. 
\smallskip

In the \thresh\ problem, site $P_i\ (i \in [k])$ holds an $r$-bit vector $x_i = \{x_{i,1}, \ldots, x_{i,r}\}$, and the $k$ sites want to compute 
{\small
\begin{eqnarray*}
\begin{array}{l}
\text{\thresh}(x_1, \ldots, x_k)  =  \left\{
  \begin{array}{rl}
   0, & \text{if } \sum_{j \in [r]} (\vee_{i \in [k]} x_{i, j}) \leq  \theta,\\
   1, & \text{if } \sum_{j \in [r]} (\vee_{i \in [k]} x_{i, j}) \ge  \theta + 1.
  \end{array}
  \right.
\end{array}
\end{eqnarray*}
}That is, if we think of the input as a $k \times r$ matrix with $x_1, \ldots, x_k$ as the rows, then in the \thresh\ problem we want to find out whether the number of columns that contain a $1$ is more than $\theta$ for a threshold parameter $\theta$.

We will show a lower bound for \thresh\ using the symmetrization technique introduced in \cite{PVZ12}. First, it will be convenient for us to study the problem in the {\em coordinator} model.

\paragraph{The Coordinator Model.}
In this model we have an additional site called the coordinator~\footnote{We can also choose, for example, $P_1$ to be the coordinator and avoid the need for an additional site, though having an additional site makes the notation cleaner.}, which has no input (formally, his input is the empty set). We require that the $k$ sites can only talk to the coordinator. The message-passing model can be simulated by the coordinator model since every time a site $P_i$ wants to talk to $P_j$, it can first send the message to the coordinator, and then the coordinator can forward the message to $P_j$. Such a re-routing only increases the communication complexity by a factor of $2$ and thus will not affect the asymptotic communication complexity. 

Let $f: \mathcal{X} \times \mathcal{Y} \to \{0,1\}$ be an arbitrary function. Let $\mu$ be a probability distribution over $\mathcal{X} \times \mathcal{Y}$. Let $f_{\OR}^k: \mathcal{X}^k \times \mathcal{Y} \to \{0,1\}$ be the problem of computing $f(x_1, y) \vee f(x_2, y) \vee \ldots \vee f(x_k, y)$ in the coordinator model, where $P_i$ has input $x_i \in \mathcal{X}$ for each $i \in [k]$, and the coordinator has $y \in \mathcal{Y}$. Given the distribution $\mu$ on $\mathcal{X} \times \mathcal{Y}$, we construct a corresponding distribution $\nu$ on $\mathcal{X}^k \times \mathcal{Y}$: We first pick $(X_1, Y) \sim \mu$, and then pick $X_2, \ldots, X_k$ from the conditional distribution $\mu\ |\ Y$. 

The following theorem was originally proposed in \cite{PVZ12}. Here we improve it by a $\log k$ factor by a slightly modified analysis, which we include here for completeness.

\begin{theorem}
\label{thm:or-f}
For any function $f: \mathcal{X} \times \mathcal{Y} \to \{0,1\}$ and any distribution $\mu$ on $\mathcal{X} \times \mathcal{Y}$ for which $\mu(f^{-1}(1)) \le 1/k^2$, we have
$D_\nu^{1/k^3} (f_{\OR}^k) = \Omega(k \cdot {\sf ED}_\mu^{1/(100k^2)}(f))$.
\end{theorem}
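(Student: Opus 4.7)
The plan is a symmetrization reduction in the coordinator model: starting from a deterministic protocol $\Pi$ for $f_{\OR}^k$ on $\nu$ with error $\le 1/k^3$ and worst-case cost $C$, I will construct a randomized two-party protocol $\Pi'$ for $f$ on $\mu$ of expected cost $O(C/k)$ and error $\le 1/(100k^2)$, and then derandomize via Yao's lemma to conclude $\mathsf{ED}_\mu^{1/(100k^2)}(f)\le O(C/k)$, i.e.\ $D_\nu^{1/k^3}(f_{\OR}^k)=\Omega(k\cdot\mathsf{ED}_\mu^{1/(100k^2)}(f))$. In $\Pi'$, Alice holds $x$, Bob holds $y$, and $(x,y)\sim\mu$. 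They use shared public randomness to draw an index $I\in[k]$ uniformly---this avoids the $\Theta(\log k)$ bits of transmitting $I$, and is the source of the $\log k$ improvement over the original PVZ statement. Alice then simulates site $P_I$ with input $x_I=x$; Bob simulates the coordinator together with the other $k-1$ sites, drawing each $x_j\sim\mu\mid y$ independently using his private randomness. They execute $\Pi$ and output its answer.

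I will first check that the simulated tuple $(x_1,\ldots,x_k,y)$ has joint distribution exactly $\nu$: under $\nu$ the $X_i$'s are i.i.d.\ $\mu\mid Y$ given $Y$, which matches the construction since Alice's $x\mid y$ is automatically $\mu\mid y$ and Bob's samples are produced the same way independently of $I$. Exchangeability of the $X_i$'s under $\nu$ then forces the expected number of bits $\Pi$ exchanges between any fixed site $P_i$ and the coordinator to be identical across $i$, hence equal to $C/k$. Since Alice and Bob in $\Pi'$ only exchange the $P_I$-to-coordinator traffic, the expected communication of $\Pi'$ is at most $C/k$.

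For correctness I will decompose the error of $\Pi'$ into two events: (a) $\Pi$ itself errs, contributing at most $1/k^3$, and (b) $\Pi$ is correct but $f_{\OR}^k\ne f(x,y)$, which forces $f(x,y)=0$ while $f(x_j,y)=1$ for some $j\ne I$. Writing $p(y):=\mu(f^{-1}(1)\mid y)$, the conditional probability of (b) given $Y=y$ is at most $(k-1)p(y)$, and $\E[p(Y)]=\mu(f^{-1}(1))\le 1/k^2$. A Markov-style refinement that restricts attention to the $\mu$-typical $y$'s on which $p(y)$ is appropriately small, together with a careful juggling of constants against the $1/k^3$ slack from (a), will deliver the claimed $1/(100k^2)$ error bound. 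Applying the easy direction of Yao's lemma to fix the shared and private coins optimally then extracts a deterministic protocol with the required expected cost and error, closing the argument.

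The main obstacle will be this error tightening, since the crude union bound on (b) gives only $O(1/k)$---looser than the target $O(1/k^2)$ by a factor of $k$. The improvement must come from using the conditional structure of $\mu$ given $Y$ rather than only the marginal bound $\mu(f^{-1}(1))\le 1/k^2$: one isolates a $\mu$-typical sub-distribution on which the false-positive probability $(k-1)p(y)$ is truly $O(1/k^2)$, and ensures that the atypical ``heavy'' $y$'s contribute no more than $1/(100k^2)$ total $\mu$-mass, so that both the error and the communication bounds survive simultaneously once the randomness is fixed.
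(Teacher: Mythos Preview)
Your symmetrization setup, the use of public randomness for $I$, and the communication bound via exchangeability are all correct and match the paper's argument. The gap is in your error analysis for event (b).

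The Markov refinement you sketch cannot close the gap from $O(1/k)$ to $O(1/k^2)$. Consider the case where $p(y)=\mu(f^{-1}(1)\mid y)$ is \emph{constant}, equal to $1/k^2$ for every $y$ (which is certainly allowed by the hypothesis $\mu(f^{-1}(1))\le 1/k^2$). Then for every $y$ the probability that some dummy $j\ne I$ has $f(x_j,y)=1$ is $1-(1-1/k^2)^{k-1}\approx 1/k$, and there is no atypical set of $y$'s to discard. More generally, if you split $y$'s at threshold $t$, Markov gives $\mu(p(Y)>t)\le 1/(k^2 t)$ while on the ``good'' set event (b) still has probability up to $(k-1)t$; minimizing $1/(k^2 t)+(k-1)t$ over $t$ yields $\Theta(1/\sqrt{k})$, not $O(1/k^2)$. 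So the proposed refinement is off by a polynomial factor, not just constants.

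The paper's fix is different and does not use any conditional structure of $\mu$: it repeats the input reduction independently three times. The key observation is that Bob can \emph{locally} tell which of the three runs is ``good'' (i.e., has $f(x_j,y)=0$ for all $j\ne I$), since he generated $y$ and all the dummy $x_j$'s himself. The probability that none of the three runs is good is at most $(1/k)^3=1/k^3$, and by a union bound $\Pi$ is correct on all three runs with probability $\ge 1-3/k^3$. Hence the overall error is at most $4/k^3\le 1/(100k^2)$, while the expected communication merely triples to $O(C/k)$. After this, fixing the randomness as you describe completes the proof.
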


\begin{proof}
Suppose Alice has $X$ and Bob has $Y$ with $(X, Y) \sim \mu$, and they want to compute $f(X, Y)$. They can use a protocol $\cal P$ for $f^k_{\OR}$ to compute $f(X, Y)$ as follows. The first step is an input reduction. Alice and Bob first pick a random $I \in [k]$ using shared randomness, which will later be fixed by the protocol to make it deterministic. Alice simulates $P_I$ by assigning it an input $X_I = X$. Bob simulates the coordinator and the remaining $k - 1$ players. He first assigns $Y$ to the coordinator, and then samples $X_1, \ldots, X_{I-1}, X_{I+1}, \ldots, X_k$ independently according to the conditional distribution $\mu\ |\ Y$, and assigns $X_i$ to $P_i$ for each $i \in [k] \backslash I$. Now $\{X_1, \ldots, X_k, Y\} \sim \nu$. Since $\mu(f^{-1}(1)) \le 1/k^2$, with probability $(1 - 1/k^2)^{k-1} \ge 1 - 1/k$, we have $f(X_i, Y) = 0$ for all $i \in [k] \backslash I$. Consequently, 
\begin{equation}
\label{eq:reduction}
f_{\OR}^k(X_1, \ldots, X_k, Y) = f(X, Y).
\end{equation} 
We say such an input reduction is {\em good}.

Alice and Bob construct a protocol $\cal P'$ for $f$ by independently repeating the input reduction three times, and running $\cal P$ on each input reduction. The probability that at least one of the three input reductions is good is at least $1 - 1/k^3$, and Bob can learn which reduction is good without any communication. This is because in the simulation he locally generates all $X_i\ (i \in [k] \backslash I)$ together with $Y$. On the other hand, by a union bound, the probability that $\cal P$ is correct for all three input reductions is at least $1-3/k^3$. Note that if we can compute $f_{\OR}^k(X_1, \ldots, X_k, Y)$ correctly for a good input reduction, then by (\ref{eq:reduction}), $\cal P$ can also be used to correctly compute $f(X, Y)$. Therefore $\cal P$ can be used to compute $f(X, Y)$ with probability at least $1 - 3/k^3 - 1/k^3 \ge 1 - 1/(100k^2)$.
 
Since in each input reduction, $X_1, \ldots, X_k$ are independent and identically distributed, and since $I \in [k]$ is chosen randomly in the two input reductions, we have that in expectation over the choice of $I$, the communication between $P_I$ and the coordinator is at most a $2/k$ fraction of the expected total communication of $\cal P$ given inputs drawn from $\nu$. By linearity of expectation, if the expected communication cost of $\cal P$ for solving $f_{\OR}^k$ under input distribution $\nu$ with error probability at most $1/k^3$ is $C$, then the expected communication cost of $\cal P'$ for solving $f$ under input distribution $\mu$ with error probability at most $1/(100k^2)$ is $O(C/k)$. Finally, by averaging there exists a fixed choice of $I \in [k]$, so that $\cal P'$ is deterministic and for which the expected communication cost of $\cal P'$ for solving $f$ under input distribution $\mu$ with error probability at most $1/(100k^2)$ is $O(C/k)$. Therefore we have $D_\nu^{1/k^3} (f_{\OR}^k) = \Omega(k \cdot {\sf ED}_\mu^{1/(100k^2)}(f))$.
\end{proof}

\subsection{The \DISJ\ Problem}
\label{sec:DISJ}
Now we choose a concrete function $f$ to be the set-disjointness problem. In this problem we have two parties: Alice has $x \subseteq [r]$ while Bob has $y \subseteq [r]$, and the parties want to compute
\begin{eqnarray*}
\text{\DISJ}(x, y)  =  \left\{
  \begin{array}{rl}
   1, & \text{if } x \cap y \neq \emptyset, \\
   0,  & \text{otherwise}.
  \end{array}
  \right.
\end{eqnarray*}
Set-disjointness is a classical problem used in proving communication lower bounds. 
We define an input distribution $\tau_\beta$ for \DISJ\ as follows. Let $\ell = (r+1)/4$. With probability $\beta$, $x$ and $y$ are random subsets of $[r]$ such that $\abs{x} = \abs{y} = \ell$ and $\abs{x \cap y} = 1$, while with probability $1 - \beta$, $x$ and $y$ are random subsets of $[r]$ such that $\abs{x} = \abs{y} = \ell$ and $x \cap y = \emptyset$.  Razborov~\cite{Raz90} proved that for $\beta = 1/4$, $D^{(1/4)/100}_{\tau_{1/4}}(\mbox{\DISJ}) = \Omega(r)$, and one can extend his arguments to any $\beta \in (0, 1/4]$, and to the expected distributional communication complexity where the expectation is take over the input distribution.
\begin{theorem}[\cite{PVZ12}, Lemma 2.2]
\label{thm:DISJ}
For any $\beta \in (0, 1/4]$,
it holds that ${\sf ED}^{\beta/100}_{\tau_\beta}(\mbox{\DISJ}) = \Omega(r)$, where the expectation is taken over the input distribution.
\end{theorem}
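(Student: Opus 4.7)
The plan is to extend Razborov's classical bound $D^{1/400}_{\tau_{1/4}}(\mbox{\DISJ}) = \Omega(r)$ along two orthogonal axes: from the specific mixing parameter $\beta = 1/4$ to any $\beta \in (0, 1/4]$, and from worst-case distributional cost to expected distributional cost. I would proceed in two stages.

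The first stage extends the bound to general $\beta$ in the worst case. Given any deterministic protocol $\Pi$ with worst-case cost $T$ and error at most $\beta/100$ under $\tau_\beta$, let $\alpha_1 = \Pr[\Pi(X,Y) \neq 1 \mid X \cap Y \neq \emptyset]$ and $\alpha_0 = \Pr[\Pi(X,Y) \neq 0 \mid X \cap Y = \emptyset]$ denote the conditional errors of $\Pi$ on YES and NO instances. Since $\beta \alpha_1 + (1-\beta)\alpha_0 \leq \beta/100$, one obtains $\alpha_1 \leq 1/100$ and, for $\beta \leq 1/4$, $\alpha_0 \leq \beta/(100(1-\beta)) \leq 1/300$. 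Viewed as a protocol under $\tau_{1/4}$, $\Pi$ then has total error at most $(1/4)(1/100) + (3/4)(1/300) = 1/200$, and Razborov's corruption argument---which tolerates any constant error bounded away from $1/4$---yields $T = \Omega(r)$.

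The second stage upgrades this to the expected-cost statement, and is where the main work lies. A naive Markov truncation is too lossy: truncating a protocol of expected cost $C$ at $TC$ bits raises the error by an additive $1/T$, and since the first-stage argument needs the post-truncation error to remain of order $\beta$ under $\tau_\beta$, one must take $T = \Omega(1/\beta)$, yielding only the weaker bound $C = \Omega(\beta r)$. Instead, I would re-run Razborov's rectangle-counting argument directly against the measure $\tau_\beta$: for a deterministic protocol $\Pi$, the expected transcript length under $\tau_\beta$ is lower-bounded by the $\tau_\beta$-entropy $H_{\tau_\beta}(\Pi(X,Y))$ of the transcript, and a corruption lemma---phrased purely in terms of the fixed conditional distributions $\mu_0$ and $\mu_1$ on NO and YES instances, and hence applicable uniformly across $\beta \in (0, 1/4]$---forces $2^{\Omega(r)}$ distinct ``large'' rectangles whose union covers all but a $\beta/100$ fraction of the input space. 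This implies $H_{\tau_\beta}(\Pi(X,Y)) = \Omega(r)$, and therefore $C = \Omega(r)$. The main technical hurdle is stating the corruption lemma in a form uniform in $\beta$; this is feasible precisely because $\mu_0$ and $\mu_1$ are the same objects for every $\tau_\beta$ (only their outer mixing weights change), so the rectangle structure and the YES/NO mass comparison inside each rectangle are independent of $\beta$.
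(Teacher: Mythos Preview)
The paper does not supply its own proof of this statement; it only remarks that ``one can extend [Razborov's] arguments to any $\beta \in (0,1/4]$, and to the expected distributional communication complexity'' and then cites \cite{PVZ12}, Lemma~2.2. So there is no in-paper argument to compare against, and I comment directly on your proposal.

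Your Stage~1 is correct.

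In Stage~2, however, your dismissal of Markov truncation rests on an oversight: the choice of default output upon truncation. If you truncate $\Pi$ at $TC$ bits and output~$1$ whenever truncation occurs, the conditional YES-error $\alpha_1$ does not increase at all, while $\alpha_0$ grows by at most $\Pr_{\mu_0}[\,|\Pi|>TC\,]\le \E_{\mu_0}[|\Pi|]/(TC)\le 1/((1-\beta)T)$, using $(1-\beta)\,\E_{\mu_0}[|\Pi|]\le \E_{\tau_\beta}[|\Pi|]=C$. Feeding these conditional errors back into your own Stage~1 bookkeeping, the truncated protocol has error under $\tau_{1/4}$ at most
\[
\tfrac14\cdot\tfrac{1}{100}+\tfrac34\!\left(\tfrac{1}{300}+\tfrac{4}{3T}\right)=\tfrac{1}{200}+\tfrac{1}{T},
\]
which for $T=200$ is a small absolute constant well within the range of Razborov's corruption bound. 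Hence $200C=\Omega(r)$ and $C=\Omega(r)$. The reason you obtained only $C=\Omega(\beta r)$ is that you measured the truncation-induced error globally under $\tau_\beta$ and then tried to re-enter Stage~1 through the hypothesis ``error $\le \beta/100$ under $\tau_\beta$''; if instead you track $\alpha_0$ and $\alpha_1$ separately through the truncation, the $1/\beta$ loss never appears.

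Your entropy alternative is salvageable in spirit but imprecisely stated. Exhibiting ``$2^{\Omega(r)}$ distinct large rectangles whose union covers all but a $\beta/100$ fraction'' does not by itself force $H_{\tau_\beta}(\Pi)=\Omega(r)$; a few rectangles could still carry almost all the mass. What Razborov's corruption lemma actually gives, combined with the error bound, is that a constant fraction of the $\mu_0$-mass must lie in rectangles each of $\mu_0$-mass at most $2^{-\Omega(r)}$. From that one gets $H_{\mu_0}(\Pi)=\Omega(r)$ and then $\E_{\tau_\beta}[|\Pi|]\ge (1-\beta)\,\E_{\mu_0}[|\Pi|]\ge(1-\beta)\,H_{\mu_0}(\Pi)=\Omega(r)$. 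This route works, but it is heavier than the one-line truncation fix above.
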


\subsection{The \orDISJ\ Problem}
If we choose $f$ to be \DISJ\ and let $\mu = \tau_\beta$ with $\beta = 1/k^2$, then we call $f_{\OR}^k$ in the coordinator model the \orDISJ\ Problem. By Theorem~\ref{thm:or-f} and Theorem~\ref{thm:DISJ}. We have
\begin{theorem}
\label{thm:orDISJ}
$D_\nu^{1/k^3} (\text{\orDISJ}) = \Omega(kr)$.
\end{theorem}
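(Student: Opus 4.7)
The plan is essentially to compose the two theorems we just established. The key observation is that if we instantiate $f = \mbox{\DISJ}$ with input distribution $\mu = \tau_\beta$, then by construction of $\tau_\beta$ we have $\mu(f^{-1}(1)) = \beta$, so the hypothesis $\mu(f^{-1}(1)) \le 1/k^2$ in Theorem~\ref{thm:or-f} is satisfied exactly when we choose $\beta = 1/k^2$. This is the natural choice that makes the two theorems line up.

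First I would set $\beta = 1/k^2$ and verify that $\beta \in (0, 1/4]$ (using the assumption $k \ge \Omega(\log n) \ge 2$), so that Theorem~\ref{thm:DISJ} applies. Then I would apply Theorem~\ref{thm:or-f} with $f = \mbox{\DISJ}$ and $\mu = \tau_{1/k^2}$ to obtain
\[
D_\nu^{1/k^3}(\mbox{\orDISJ}) \;=\; \Omega\!\left(k \cdot {\sf ED}_{\tau_{1/k^2}}^{1/(100k^2)}(\mbox{\DISJ})\right).
\]
Next I would invoke Theorem~\ref{thm:DISJ} with $\beta = 1/k^2$. The allowed error there is $\beta/100 = 1/(100k^2)$, which matches the error parameter appearing in the conclusion of Theorem~\ref{thm:or-f} exactly, so we get ${\sf ED}_{\tau_{1/k^2}}^{1/(100k^2)}(\mbox{\DISJ}) = \Omega(r)$. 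Plugging this in yields the claimed $\Omega(kr)$ bound.

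There is essentially no technical obstacle here; the proof is a two-line composition. The only thing worth double-checking is the bookkeeping on the error parameters: the symmetrization in Theorem~\ref{thm:or-f} is set up so that the single-copy error $1/(100k^2)$ it demands is precisely the quantity $\beta/100$ provided by Theorem~\ref{thm:DISJ} under the same choice of $\beta = 1/k^2$ that makes $\mu(f^{-1}(1)) \le 1/k^2$. Once one notices that these parameters were designed to be compatible, the theorem follows immediately and no further argument is needed.
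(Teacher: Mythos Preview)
Your proposal is correct and matches the paper's own argument exactly: the paper simply states ``By Theorem~\ref{thm:or-f} and Theorem~\ref{thm:DISJ}'' after defining \orDISJ\ as $f_{\OR}^k$ with $f=\mbox{\DISJ}$ and $\mu=\tau_{1/k^2}$. Your verification that the error parameters line up (so that $\beta/100 = 1/(100k^2)$ matches the conclusion of Theorem~\ref{thm:or-f}) is precisely the bookkeeping the paper leaves implicit.
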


\subsubsection{The Complexity of \thresh}
We prove our lower bound for the setting of the parameter $\theta = (3r-1)/4$. We define the following input distribution $\zeta$ for \threshR: We choose $\{X_1, \ldots, X_k, Y\} \sim \nu$ where $\nu$ is the input distribution for \orDISJ, and then simply use $\{X_1, \ldots, X_k\}$ as the input for \thresh.

\begin{lemma}
\label{lem:noise}
Under the distribution $\zeta$, assuming $k \ge c_k \log r$ for a large enough constant $c_k$, we have that $\bigvee_{i \in [k]} X_{i,j} = 1$ for all $j \in [r] \backslash Y$ with probability $1 - 1/k^{10}$. 
\end{lemma}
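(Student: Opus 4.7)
The plan is to fix any $j \in [r]\setminus Y$, show that the conditional probability (given $Y$) that $j$ is missed by every $X_i$ is exponentially small in $k$, and then union-bound over the at most $r$ such indices. All probabilities below are conditional on $Y$, but since the resulting bound will be uniform in the specific value of $Y$, the lemma will then follow unconditionally under $\zeta$.

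First I would compute $p := \Pr[j \in X_i \mid Y]$ for a single site $i \in [k]$ and fixed $j \in [r]\setminus Y$. By the construction of $\nu$, each $X_i$ is drawn independently from $\mu \mid Y = \tau_{1/k^2} \mid Y$: with probability $1 - 1/k^2$, $X_i$ is a uniformly random $\ell$-subset of $[r]\setminus Y$; and with probability $1/k^2$, $X_i$ consists of one uniformly random element of $Y$ together with a uniformly random $(\ell-1)$-subset of $[r]\setminus Y$. Using $\ell = (r+1)/4$ so that $\abs{[r]\setminus Y} = r-\ell = 3\ell-1$,
\[
p \;=\; \Bigl(1-\tfrac{1}{k^2}\Bigr)\cdot\frac{\ell}{3\ell-1} \;+\; \frac{1}{k^2}\cdot\frac{\ell-1}{3\ell-1} \;\ge\; \tfrac{1}{3} - O(1/r + 1/k^2),
\]
so $1-p \le 0.7$ once $r$ and $k$ exceed some absolute constant (which is ensured by $k \ge c_k \log r$ for $c_k$ sufficiently large, together with the implicit assumption that $r$ is not a tiny constant).

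Since $X_1,\ldots,X_k$ are conditionally i.i.d.\ given $Y$, the events $\{j \notin X_i\}_{i\in[k]}$ are mutually independent conditional on $Y$, so $\Pr[j \notin X_i \text{ for all } i \mid Y] = (1-p)^k \le (0.7)^k$. A union bound over the at most $r$ indices $j \in [r]\setminus Y$ bounds the conditional failure probability by $r\cdot(0.7)^k$. Choosing $c_k$ large enough---so that $k \ge c_k\log r$ forces $k\log(10/7) \ge \log r + 10\log k$, which holds in the polynomially-bounded regime $\log k = O(\log r)$ imposed by the paper's standing assumption $k \le \min\{n,m\}$---gives $r\cdot(0.7)^k \le 1/k^{10}$, as required. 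The argument is a routine Chernoff/union-bound calculation; the only point needing care is confirming that the $1/k^2$ mass $\tau_{1/k^2}$ places on intersecting pairs perturbs $p$ by only $O(1/k^2)$ from the ``clean'' value $\ell/(3\ell-1)\approx 1/3$, so that each $X_i$ effectively behaves like a uniform $\ell$-subset of $[r]\setminus Y$ for the purposes of covering indices outside $Y$.
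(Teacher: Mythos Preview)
Your proof is correct and follows essentially the same approach as the paper's: bound the probability that a single coordinate $j\in[r]\setminus Y$ is missed by all $X_i$ as $(1-p)^k$ for some constant $p>0$, then take a union bound over $O(r)$ coordinates. The paper simply uses the cruder lower bound $\Pr[X_{i,j}=1]\ge 1/4$ (yielding $(3/4)^k$) instead of your sharper estimate $p\approx 1/3$, but the structure of the argument is identical.
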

\begin{proof}
For each $j \in [r] \backslash Y$, we have $\bigvee_{i \in [k]} X_{i,j} = 1$ with probability at least $1 - (1-1/4)^k$. This is because $\Pr[X_{i,j} = 1] \ge 1/4$ for each $j \in [r] \backslash Y$, by our choices of $X_i$. By a union bound, with probability at least 
\begin{eqnarray*}
&&\left(1 - (3/4)^k \cdot \abs{[r] \backslash Y} \right) \\
&=& \left(1 - (3/4)^k \cdot (3r-1)/4 \right) \\
&\ge&  1 - 1/k^{10}
\end{eqnarray*} (by our assumption $c_k\log r \le k \le r$ for a large enough constant $c_k$), we have $\bigvee_{i \in [k]} X_{i,j} = 1$ for all $j \in [r] \backslash Y$.
\end{proof}

\begin{theorem}
\label{thm:thresh}
$D^{1/k^4}_{\zeta}(\mbox{\threshR}) = \Omega(kr)$, assuming $c_k \log r \le k \le r$ for a large enough constant $c_k$.
\end{theorem}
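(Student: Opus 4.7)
The plan is to reduce \orDISJ\ under the distribution $\nu$ to \threshR\ under $\zeta$, using Lemma~\ref{lem:noise} as the key structural fact. The reduction is essentially ``free'': an instance $(X_1,\ldots,X_k,Y)\sim\nu$ of \orDISJ\ can be viewed as an instance $(X_1,\ldots,X_k)\sim\zeta$ of \threshR\ simply by having the coordinator discard $Y$ (or rather, by having the $k$ sites run the \thresh\ protocol on their $X_i$'s and having one of them send the one-bit answer to the coordinator at the end). So the bulk of the argument is in verifying that \thresh\ and \orDISJ\ give the same answer with sufficiently high probability on this distribution.

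For the correctness analysis, I would separately consider the columns $j\in[r]\setminus Y$ and the columns $j\in Y$. By Lemma~\ref{lem:noise}, with probability $1-1/k^{10}$ every one of the $(3r-1)/4$ columns indexed by $[r]\setminus Y$ already contains a $1$ somewhere, contributing exactly $\theta=(3r-1)/4$ to the threshold count. For the remaining $|Y|=(r+1)/4$ columns, the contribution to $\sum_j\bigvee_i X_{i,j}$ is positive iff some $X_i$ intersects $Y$, which by definition of $\tau_{1/k^2}$ and $\nu$ happens iff $\text{\orDISJ}(X_1,\ldots,X_k,Y)=1$. Thus on the good event of Lemma~\ref{lem:noise}, \threshR$(X_1,\ldots,X_k)=\text{\orDISJ}(X_1,\ldots,X_k,Y)$.

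To finish, suppose for contradiction $D^{1/k^4}_{\zeta}(\text{\threshR})\le C$. Take the witnessing deterministic protocol, append one bit from the site that learns the answer to the coordinator, and call the result $\mathcal{Q}$. Then $\mathcal{Q}$ solves \orDISJ\ on $\nu$ using $C+1$ bits, with error at most $1/k^4+1/k^{10}\le 1/k^3$ (the $1/k^4$ from the \thresh\ protocol and $1/k^{10}$ from the bad event of Lemma~\ref{lem:noise}). Theorem~\ref{thm:orDISJ} then gives $C+1=\Omega(kr)$, hence $C=\Omega(kr)$.

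The argument is essentially bookkeeping; the one place to be careful is checking that the error budgets line up (the $1/k^4$ is chosen precisely so that the total remains below the $1/k^3$ threshold of Theorem~\ref{thm:orDISJ} after absorbing the $1/k^{10}$ loss from Lemma~\ref{lem:noise}), and that the assumption $c_k\log r\le k\le r$ is inherited from Lemma~\ref{lem:noise}. No step looks like a serious obstacle, since the choice $\theta=(3r-1)/4$ matches $|[r]\setminus Y|$ exactly so as to make the high-probability ``noise'' columns contribute precisely the threshold value, which is what makes the one-bit gap between \orDISJ$=0$ and \orDISJ$=1$ cleanly detectable by \thresh.
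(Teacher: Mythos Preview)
Your proposal is correct and follows essentially the same approach as the paper: reduce \orDISJ\ under $\nu$ to \threshR\ under $\zeta$ by discarding $Y$, use Lemma~\ref{lem:noise} to ensure the $(3r-1)/4$ columns in $[r]\setminus Y$ each contain a $1$ (contributing exactly $\theta$), so that the threshold is exceeded iff some $X_i$ intersects $Y$, and then absorb the $1/k^{10}$ failure probability into the error budget to stay below $1/k^3$. Your write-up is in fact somewhat more explicit than the paper's (e.g., the one-bit forwarding to the coordinator and the column-by-column accounting), but the argument is the same.
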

\begin{proof}
By Lemma~\ref{lem:noise}, it is easy to see that any protocol $\mathcal{P}$ that computes \threshR\ on input distribution $\zeta$ correctly with error probability $1/k^4$ can be used to compute \orDISJ\ on distribution $\nu$ correctly with error probability $1/k^4 +  1/k^{10} < 1/k^3$, since if $(X_1, \ldots, X_k, Y) \sim \nu$, then with probability $1-1/k^{10}$, we have
$$\textstyle \text{\orDISJ}(X_1, \ldots, X_k, Y) = \left( \exists j \in Y, \text{\DISJ}(X_j, Y) = 1 \right) = \text{\threshR}(X_1, \ldots, X_k).$$
The theorem follows from Theorem~\ref{thm:orDISJ}.
\end{proof}


\section{Statistical Problems}
\label{sec:stat}
For technical convenience in the reductions, we make the mild assumption that $c_k \log n \le k \le n$ where $c_k$ is some large enough constant. For convenience, we will repeatedly ignore an additive $ O(1/k^{10})$ error probability introduced in the reductions, since these will not affect the correctness of the reductions, and can be added to the overall error probability by a union bound. 

\subsection{$F_0$ (\#distinct-elements)}
Recall that in the $F_0$ problem, each site $P_i$ has a set $S_i \subseteq [n]$, and the $k$ sites want to compute the number of distinct elements in $\bigcup_{i \in [k]} S_i$. 

For the lower bound, we reduce from \threshP. Given an input $\{X_1, \ldots, X_k\} \sim \zeta$ for \threshP, each site sets $S_i = X_i$.  Let $\sigma_F$ be the input distribution of $F_0$ after this reduction.

By Lemma~\ref{lem:noise} we know that under distribution $\zeta$, with probability $1 - 1/k^{10}$, for all $j \in [n] \backslash Y$ (recall that $Y$ is the random subset of $[n]$ of size $(n+1)/4$ we used to construct $X_1, \ldots, X_k$ in distribution $\zeta$), $\bigvee_{i \in [k]} X_{i,j} = 1$. Conditioned on this event, we have 
\begin{eqnarray*}
&&\text{\threshP}(X_1, \ldots, X_k) = 1 \\ &\iff &F_0(\cup_{i \in [k]} S_i) > (3n-1)/4.
\end{eqnarray*}
Therefore, by Theorem~\ref{thm:thresh} we have that $D^{1/k^4}_{\sigma_F}(F_0) = \Omega(kn)$. Note that in this reduction, we have to choose $n = \Theta(F_0)$. Therefore, it makes more sense to write the lower bound as $D^{1/k^4}_{\sigma_F}(F_0) = \Omega(k F_0)$.

The following corollary follows from Yao's Lemma (Lemma~\ref{lem:Yao}) and the discussion following it. 
\begin{corollary}
\label{cor:F0}
$R^{1/3}(F_0) = \Omega(k F_0/\log k)$.
\end{corollary}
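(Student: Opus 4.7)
The plan is short because essentially all the work has already been done in establishing $D^{1/k^4}_{\sigma_F}(F_0) = \Omega(k F_0)$; what remains is to translate a distributional bound with error $1/k^4$ into a randomized bound with constant error $1/3$, losing only a $\log k$ factor.

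First I would invoke Yao's Lemma (Lemma~\ref{lem:Yao}) applied to the distribution $\sigma_F$ with error parameter $\delta = 1/k^4$. This immediately gives
\[
R^{1/k^4}(F_0) \;\ge\; D^{1/k^4}_{\sigma_F}(F_0) \;=\; \Omega(k F_0),
\]
using the bound just derived in the preceding paragraph.

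Next I would apply the standard success-amplification argument described right after Lemma~\ref{lem:Yao}. Given any randomized protocol $\Pi$ for $F_0$ with error $1/3$ and cost $C$, one runs $\Pi$ independently $O(\log(1/\delta)) = O(\log k)$ times (with $\delta = 1/k^4$) and outputs the most frequent answer. By a Chernoff bound, the majority answer is correct with probability at least $1 - 1/k^4$, at a communication cost of $O(C \log k)$. Thus $R^{1/k^4}(F_0) = O(R^{1/3}(F_0) \cdot \log k)$. (Note that although the discussion in Section~\ref{sec:prelim} was phrased for $\{0,1\}$-valued functions, the mode-of-repeated-trials argument works verbatim for the numerical output of $F_0$.) Combining this with the previous inequality yields
\[
R^{1/3}(F_0) \;=\; \Omega\!\left(\frac{R^{1/k^4}(F_0)}{\log k}\right) \;=\; \Omega\!\left(\frac{k F_0}{\log k}\right),
\]
which is the desired corollary.

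There is no real obstacle here; the only thing worth checking carefully is that the amplification argument still applies to a non-boolean output such as $F_0$, and that the mild assumption $\Omega(\log n) \le k$ used throughout (so that $\log(1/k^4) = \Theta(\log k)$) lines up with the parameter regime of Theorem~\ref{thm:thresh}. Both are routine.
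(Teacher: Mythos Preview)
Your proposal is correct and matches the paper's own approach exactly: the paper simply says the corollary ``follows from Yao's Lemma (Lemma~\ref{lem:Yao}) and the discussion following it,'' which is precisely the Yao-plus-amplification argument you spell out. Your observation that the majority/mode amplification still works for the numerical output of $F_0$ is the one point the paper glosses over, and your justification (the correct value appears more than half the time after $O(\log k)$ repetitions, hence is the mode) is sound.
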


An almost matching upper bound of $O(k (F_0 \log F_0 + \log n))$ can be obtained as follows: the $k$ sites first compute a $2$-approximation $F_0'$ to $F_0$ using the protocol in \cite{CMY11} (see Section~\ref{sec:F0}), which costs $O(k \log n)$ bits. Next, they hash every element to a universe of size $(F_0')^3$, so that there are no collisions among hashed elements with probability at least $1 - 1/F_0$, by a union bound. Finally, all sites send their distinct elements (after hashing) to $P_1$ and then $P_1$ computes the number of distinct elements over the union of the $k$ sets locally. This step costs $O(k F_0\log F_0)$ bits of communication.

\subsection{$\ell_\infty$ (MAX)}
In the $\ell_\infty$ problem, each site $P_i$ has a set $S_i \subseteq [n]$, and the $k$ sites want to find an element in $\bigcup_{i \in [k]} S_i$ with the maximum frequency. 

For the lower bound, we again reduce from \threshP. Recall that in our hard input distribution for \threshP, there is one special column that contains zero or a single $1$. The high level idea is that we try to make this column to have the maximum number of $1$'s if originally it contains a single $1$, by flipping bits over a random set of rows. Concretely, given an input $\{X_1, \ldots, X_k\} \sim \zeta$ for \threshP, the $k$ sites create an input $\{S_1, \ldots, S_k\}$ as follows: first, $P_1$ chooses a set $R \subseteq [k]$ by independently including each $i \in [k]$ with probability $7/8$, and informs all sites $P_i\ (i \in R)$ by sending each of them a bit. This step costs $O(k)$ bits of communication. Next, for each $i \in R$, $P_i$ flips $X_{i,j}$ for each $j \in [n]$. Finally, each $P_i$ includes $j \in S_i$ if $X_{i,j} = 1$ after the flip and $j \not\in S_i$ if $X_{i,j} = 0$. Let $\sigma_L$ be the input distribution of $\ell_\infty$ after this reduction. 

They repeat this input reduction independently $T$ times where $T = c_T \log k$ for a large constant $c_T$, and at each time they run $\ell_\infty(\cup_{i \in [k]} S_i)$. Let $R_1, \ldots, R_T$ be the random set $R$ sampled by $P_1$ in the $T$ runs, and let $O_1, \ldots, O_T$ be the outputs of the $T$ runs. They return \threshP$(X_1, \ldots, X_k) = 1$ if there exists a $t \in [T]$ such that $O_t \ge \abs{R_t} + 1$ and $0$ otherwise. 

We focus on a particular input reduction. We view an input for \threshP\ as a $k \times n$ matrix. The $i$-th row of the matrix is $X_i$. After the bit-flip operations, for each column $j \in [n] \backslash Y$, we have for each $i \in [k]$ that 
\begin{eqnarray*}
&&\Pr[X_{i,j} = 1] \\
&\le& 7/8 \cdot \left(1 - \frac{(n+1)/4 - 1}{(3n-1)/4}\right) + 1/8 \cdot \frac{(n+1)/4}{(3n-1)/4} \\
&<& 3/4.
\end{eqnarray*}
By a Chernoff bound, for each $j \in [n] \backslash Y$, $\sum_{i \in [k]} X_{i,j} < 13k/16$ with probability $1 - e^{-\Omega(k)}$. Therefore with probability at least $(1 - e^{-\Omega(k)} \cdot n) \ge (1 - 1/k^{10})$ (assuming that $c_k \log n \le k \le n$ for a large enough constant $c_k$), $\sum_{i \in [k]} X_{i,j} < 13k/16$ holds for all $j \in [n] \backslash Y$. 

Now we consider columns in $Y$. We can show again by Chernoff bound that $\abs{R} > 13k/16$ with probability $(1 - 1/k^{10})$ for all columns in $Y$, since each $i \in [k]$ is included into $R$ with probability $7/8$, and before the flips, the probability that $X_{i,j} = 1$ for an $i$ when $j \in Y$ is negligible. Therefore with probability $(1 - 1/k^{10})$, the column with the maximum number of $1$s is in the set $Y$, which we condition on in the rest of the analysis.

In the case when \threshP$(X_1, \ldots, X_k) = 1$, then with probability at least $1/8$, there exists a column $j \in Y$ and a row $i \in [k] \backslash R$ for which $X_{i,j} = 1$. If this happens, then for this $j$ we have $\sum_{i \in [k]} X_{i,j} \ge \abs{R} + 1$, or equivalently, $\ell_\infty(\cup_{i \in [k]} S_i) \ge \abs{R} + 1$. Otherwise, if \threshP$(X_1, \ldots, X_k) = 0$, then $\sum_{i \in [k]} X_{i,j} = \abs{R}$ for all $j \in Y$. Therefore, if \threshP$(X_1, \ldots, X_k) = 1$, then the probability that there exists a $t \in [T]$ such that $O_t \ge \abs{R_t}+1$ is at least $1 - (1-1/8)^T > 1 - 1/k^{10}$ (by choosing $c_T$ large enough). Otherwise, if \threshP$(X_1, \ldots, X_k) = 0$, then $O_t = \abs{R_t}$ for all $t \in [T]$.

Since our reduction only uses $T \cdot O(k) = O(k \log k)$ extra bits of communication and introduces an extra error of $O(1/k^{10})$, which will not affect the correctness of the reduction. By Theorem~\ref{thm:thresh}, we have that $D^{1/k^4}_{\sigma_L}(\ell_\infty) = \Omega(kn)$. Note that in the reduction, we have to assume that $\Theta(\ell_\infty) = \Theta(k)$. In other words, if $\ell_\infty \ll k$ then we have to choose $k' = \Theta(\ell_\infty)$ sites out of the $k$ sites to perform the reduction. Therefore it makes sense to write the lower bound as $D^{1/k^4}_{\sigma_L}(\ell_\infty) = \Omega(\min\{\ell_\infty, k\} n)$.

The following corollary follows from Yao's Lemma (Lemma~\ref{lem:Yao}) and the discussion following it. 
\begin{corollary}
\label{cor:l-infty}
$R^{1/3}(\ell_\infty) = \Omega(\min\{\ell_\infty, k\} n/\log k)$.
\end{corollary}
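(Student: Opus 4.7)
The plan is to invoke the machinery already assembled in Section~\ref{sec:prelim}, since the hard work — the reduction from \threshP\ to $\ell_\infty$ — was just carried out to establish the distributional bound $D^{1/k^4}_{\sigma_L}(\ell_\infty) = \Omega(\min\{\ell_\infty, k\} n)$. The corollary then falls out in two mechanical steps.

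First I would apply Yao's Lemma (Lemma~\ref{lem:Yao}) to the distribution $\sigma_L$ and the function $\ell_\infty$, which immediately gives
\[
R^{1/k^4}(\ell_\infty) \;\geq\; D^{1/k^4}_{\sigma_L}(\ell_\infty) \;=\; \Omega(\min\{\ell_\infty, k\} n).
\]
Second, I would invoke the error-amplification observation stated right after Yao's Lemma: for any $\delta \in (0, 1/3]$, a $1/3$-error protocol can be boosted to a $\delta$-error protocol by running it $O(\log(1/\delta))$ times with independent randomness and taking the majority outcome, so that $R^{1/3}(f) = \Omega(R^{\delta}(f)/\log(1/\delta))$. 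Setting $\delta = 1/k^4$ gives $\log(1/\delta) = 4\log k = O(\log k)$, so
\[
R^{1/3}(\ell_\infty) \;=\; \Omega\!\left(\frac{R^{1/k^4}(\ell_\infty)}{\log k}\right) \;=\; \Omega\!\left(\frac{\min\{\ell_\infty, k\}\, n}{\log k}\right),
\]
which is exactly the claimed bound.

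There is no real obstacle here since the corollary is the standard ``distributional-to-randomized'' packaging of the preceding lemma; the only thing worth double-checking is that the boosting argument works when the output is not boolean. However, $\ell_\infty$ asks for an element with maximum frequency, and one can still amplify correctness (for instance, by taking the element returned most often across the $O(\log k)$ independent runs, or by using the boolean-valued verification implicit in the reduction, where the reduction ultimately only reads off whether some $O_t \geq |R_t|+1$), so the $O(\log k)$-overhead amplification goes through. Finally, I would remark that the lower bound holds even for protocols allowed public randomness, since Yao's Lemma as stated covers the general randomized model.
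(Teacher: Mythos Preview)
Your proposal is correct and matches the paper's approach exactly: the paper states that the corollary ``follows from Yao's Lemma (Lemma~\ref{lem:Yao}) and the discussion following it,'' which is precisely the two-step argument (Yao's minimax plus $O(\log k)$-overhead error amplification) you spell out. Your additional remark addressing the non-boolean nature of the $\ell_\infty$ output is a reasonable clarification that the paper leaves implicit.
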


A simple protocol that all sites send their elements-counts to the first site solves $\ell_{\infty}$ with $O(\min\{k, \ell_\infty\} n \log n)$ bits of communication, which is almost optimal in light of our lower bound above. 

\section{Graph Problems}
\label{sec:graph}
In this section we consider graph problems. Let $G = (V, E)$ with $\abs{V} = n$ and $\abs{E} = m$ be an undirected graph. Each site has a subgraph $G_i \subseteq G$, and the $k$ sites want to compute a property of $G$ via a communication protocol. For technical convenience we again assume that $c_k \log n \le k \le \min\{n, m\}$, where $c_k$ is a large enough constant. Except for the upper bound for cycle-freeness in the without edge duplication case, for which $m \ge n$ always causes the graph to not be cycle-free, 
we assume that $m \ge n$ to avoid an uninteresting case-by-case analysis.

Most lower bounds in this section are shown by reductions from \threshR\ for some value $r \le n^2$. For convenience of presentation, during some reductions we may generate graphs with more than $n$ vertices. This will not affect the order of the lower bounds as long as the number of vertices is $O(n)$ and the number of edges is $O(m)$ (if $m$ appears in the lower bound as a parameter). 

The following procedure will be used several times in our reductions. Thus, we present it separately.
\\

{\bf Reconstructing $Y$ from $X_1, X_2, \ldots, X_k$.} Given an input $\{X_1, \ldots, X_k\} \sim \zeta$ for \threshR\ to the $k$ sites, the first site $P_1$ can construct $Y$ correctly with probability $1 - O(1/k^{10})$, using $O(r \log r)$ bits of communication, assuming that $c_k \log r \le k \le r$ for a large enough constant $c_k$. We view the input as a $k \times r$ matrix with the $k$ sites' inputs as rows. For each column $j \in [r]$, $P_1$ randomly selects $c_Y \log r$ sites for some large enough constant $c_Y$, asks each of them for the $j$-th bit of their input vectors, and then computes the sum of these bits, denoted by $s_j$. Note that if $j \in [r] \backslash Y$, then by a Chernoff bound with probability $1 - e^{-\kappa \cdot c_Y \cdot \log r}$ ($\kappa$ is an absolute constant), we have that $s_j \ge c_Y\log r/2$. Therefore with probability $1 - e^{-\kappa \cdot c_Y \cdot \log r} \cdot r \ge 1 - 1/k^{10}$, for all $j \in [r] \backslash Y$, it holds that $s_j \ge c_Y\log r/2$. On the other hand, again by a Chernoff bound we have that with probability $1 - 1/k^{10}$, for all $j \in Y$, $s_j < c_Y \log r/2$. Therefore $P_1$ can reconstruct $Y$ correctly with probability $1 - O(1/k^{10})$. Since the $O(r \log r)$ extra bits of communication and the $O(1/k^{10})$ additional error probability will not affect the correctness of any of our reductions below, we can simply assume that $P_1$ can always reconstruct $Y$ for free.

\subsection{Degree}
In the degree problem, give a vertex $v \in V$, the $k$ sites want to compute the degree of $v$. 

If edge duplication is not allowed, then the degree problem can be solved in $O(k \log n)$ bits of communication: each site sends the number of edges containing the query vertex to the first site $P_1$ and then $P_1$ adds up these $k$ numbers. A lower bound of $\Omega(k)$ bits also holds since each site has to speak at least once in our communication model.

When we allow edge duplication, the degree problem is essentially the same as that of $F_0$, by the following reduction. Given an input $\{X_1, \ldots, X_k\} \sim \tau_F$ for $F_0$, the $k$ sites construct a graph $G = (V, E)$ where $V = \{v_1, \ldots, v_n\}$ of size $n$. Each site $P_i\ (i \in [k])$ does the following: for each element $j \in X_i$ and $j \neq 1$, it creates an edge $(v_1, v_j)$. Let $G$ be the resulting graph. Then 
$F_0(X_1, \ldots, X_k) = \text{Degree}(v_1) - 1.$
Thus from Corollary~\ref{cor:F0} we get
\begin{corollary}
\label{cor:degree}
$R^{1/3}(\text{degree}) = \Omega(k d_v/\log k)$.
\end{corollary}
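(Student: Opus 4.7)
The proof is a direct consequence of the reduction sketched in the paragraph immediately preceding the corollary, together with Corollary~\ref{cor:F0}. My plan is to formalize that the reduction is in fact communication-free and parameter-preserving, and then invoke the $F_0$ lower bound.

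First, I would observe the following: the reduction turns any input $(X_1,\ldots,X_k)$ for $F_0$ into an input $G_1,\ldots,G_k$ for the degree problem on a graph $G = \cup_i G_i$ on the fixed vertex set $V = \{v_1,\ldots,v_n\}$, and each site can perform its side of the reduction locally from its own $X_i$, without any communication. The query vertex is fixed to be $v_1$, so it does not need to be communicated either. Hence any protocol $\Pi$ computing $\mathrm{Degree}(v_1)$ on $G$ in the message-passing model yields, with exactly the same communication, a protocol for $F_0(X_1,\ldots,X_k)$ via the identity $F_0 = \mathrm{Degree}(v_1) + \mathbf{1}[1 \in \cup_i X_i]$ (the extra bit can be transmitted for free by having one site broadcast whether $1 \in X_i$, at a cost of $O(k)$ bits, which is absorbed into the bound).

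Second, I would check the parameter correspondence. In the hard instances used for Corollary~\ref{cor:F0} (which come from the \threshP\ distribution $\sigma_F$), each element $j \in \cup_i X_i$ with $j \ne 1$ contributes exactly one edge incident to $v_1$, so $d_{v_1} = F_0 - \mathbf{1}[1 \in \cup_i X_i] = \Theta(F_0) = \Theta(n)$. Thus in the distributional instance produced by the reduction, $d_v$ and $F_0$ are equal up to an additive constant. Combining this with the previous paragraph, any randomized protocol for the degree problem with error $1/3$ gives a randomized protocol for $F_0$ with the same error (and the same communication up to $O(k)$ bits), so
\[
R^{1/3}(\text{degree}) \;\geq\; R^{1/3}(F_0) - O(k) \;=\; \Omega\!\left( \frac{k F_0}{\log k} \right) \;=\; \Omega\!\left( \frac{k d_v}{\log k} \right),
\]
by Corollary~\ref{cor:F0}. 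Finally, to get the bound in terms of an \emph{arbitrary} target value of $d_v$ (rather than just $d_v = \Theta(n)$), I would simply instantiate the reduction on a universe of size $\Theta(d_v)$ rather than $n$: the same argument then gives the lower bound parametrized by $d_v$.

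There is no substantive obstacle here; the only minor point to be careful about is the additive $\pm 1$ offset coming from whether the element $1$ appears in $\cup_i X_i$, and the fact that the query vertex $v_1$ is hard-coded into the reduction, so no communication is needed to specify it. Both are harmless.
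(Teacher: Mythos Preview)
Your proposal is correct and follows exactly the paper's approach: a communication-free local reduction from $F_0$ to the degree of a fixed vertex $v_1$, followed by an appeal to Corollary~\ref{cor:F0}. If anything, you are more careful than the paper about the additive $\pm 1$ offset arising from whether element $1$ appears in $\cup_i X_i$; the paper simply asserts $F_0 = \mathrm{Degree}(v_1) - 1$ and moves on, whereas you correctly note that the indicator bit can be communicated with $O(k)$ extra bits, which is absorbed into the bound.
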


An $O(k d_v \log n)$ bit upper bound is the following: each site sends all the neighbouring vertices of $v$ to the first site.

\subsection{Cycle-freeness}
In the cycle-freeness problem, the $k$ sites want to check whether $G$ contains a cycle.

\subsubsection{Without Edge Duplication}
If edge duplication is not allowed, then we have the following simple protocol: $P_2, \ldots, P_k$ send the number of their local edges to $P_1$ and $P_1$ computes the total number of edges in the graph $G$, denoted by $m$. If $m \ge n$  then $P_1$ determines immediately that $G$ contains a cycle, since every graph on $n$ vertices having at least $n$ edges must contain a cycle. Otherwise if $m < n$, then $P_2, \ldots, P_k$ send all their edges to $P_1$, who then does a local check. The communication cost of this protocol never exceeds $O(k \log n +\min\{m, n\}\log n) = O(\min\{m,n\}\log n)$ bits.

Let $h = \min\{m,n\}$. An $\Omega(h)$ bit lower bound holds even when $k = 2$, by a reduction from the \DISJP\ problem: suppose $P_1$ has $X$ and $P_2$ has $Y$, where $(X, Y) \sim \tau_{1/4}$ is the hard input distribution for \DISJP. $P_1$ and $P_2$ construct a graph $G$ on the vertex set $\{s, t, v_1, \ldots, v_h\}$ as follows: for each $i \in X$, $P_1$ creates an edge $(s, v_i)$, and he/she also creates an additional edge $(s, t)$. Similarly, for each $i \in Y$, $P_2$ creates an edge $(v_i, t)$. Let $\sigma_{C_1}$ be the resulting input distribution of $G$. It is easy to see from the reduction that if $X \cap Y \neq  \emptyset$, then there is a cycle in the form of $s \to t \to v_i \to s$ for some $i \in [h]$. Otherwise the graph is a forest. Therefore, $$\text{\DISJP}(X, Y) = 1 \iff G \text{ contains a cycle}.$$ Therefore by Theorem~\ref{thm:DISJ} it follows that $D^{1/400}_{\sigma_{C_1}}(\text{cycle-freeness}) = \Omega(h)$.

\begin{corollary}
\label{cor:cycle}
$R^{1/3}(\text{cycle-freeness without edge duplication}) = \Omega(\min\{m,n\})$.
\end{corollary}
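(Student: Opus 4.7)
The plan is to deduce the corollary from the distributional lower bound $D^{1/400}_{\sigma_{C_1}}(\text{cycle-freeness}) = \Omega(h)$ (with $h = \min\{m,n\}$) that was just established in the preceding paragraph, by first applying Yao's lemma and then invoking a constant-factor error amplification. Yao's lemma (Lemma~\ref{lem:Yao}) instantiated with the hard distribution $\sigma_{C_1}$ and error $\delta = 1/400$ immediately yields $R^{1/400}(\text{cycle-freeness without edge duplication}) = \Omega(h)$, since randomized communication complexity at any fixed error level dominates distributional complexity at the same error level under every input distribution.

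Next, I would convert the bound on $R^{1/400}$ into a bound on $R^{1/3}$ using the boosting observation recorded immediately after Lemma~\ref{lem:Yao}: any protocol $\Pi$ with error probability $1/3$ on a Boolean problem can be run $O(\log(1/\delta)) = O(\log 400) = O(1)$ independent times with the majority of the outputs reported, producing a protocol of error at most $1/400$ whose communication is $O(1)$ times that of $\Pi$. Hence $R^{1/400}(f) = O(R^{1/3}(f))$, and combining with the previous paragraph gives
\[ R^{1/3}(\text{cycle-freeness without edge duplication}) = \Omega(R^{1/400}(f)) = \Omega(h) = \Omega(\min\{m,n\}), \]
which is the claim. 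Note the direction of the amplification is the usable one here: we are translating a lower bound from the smaller error regime ($1/400$) down to the standard regime ($1/3$), and this only costs us a constant factor in the asymptotics.

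There is no real obstacle to this step; the substantive work already appeared in the preceding paragraph, where the two-party reduction from \DISJP\ builds a graph on vertex set $\{s,t,v_1,\ldots,v_h\}$ in which $P_1$ supplies the edge $(s,t)$ together with the edges $(s,v_i)$ for $i\in X$ and $P_2$ supplies the edges $(t,v_i)$ for $i \in Y$, giving a triangle $s\to t\to v_i\to s$ exactly when $i\in X\cap Y$. Because $P_1$ and $P_2$ contribute disjoint edge sets, the no-edge-duplication constraint is honored, and the resulting graph uses $h+2 = O(n)$ vertices and $2h+1 = O(m)$ edges (using the standing assumption $m \geq n$), so the parameter $h = \min\{m,n\}$ can indeed be realized without inflating the problem size beyond the bounds in the corollary statement.
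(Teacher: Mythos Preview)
Your proposal is correct and follows exactly the approach the paper intends: the corollary is obtained from the preceding distributional bound $D^{1/400}_{\sigma_{C_1}}(\text{cycle-freeness}) = \Omega(h)$ via Yao's Lemma together with the constant-factor error amplification recorded after Lemma~\ref{lem:Yao}. Your additional remarks verifying that the reduction respects the no-duplication constraint and the vertex/edge budgets are accurate (the edge count is actually about $h/2$ rather than $2h+1$, but your upper bound suffices, and the $m\ge n$ assumption is not actually needed since $2h+1=O(m)$ holds in either case).
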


\subsubsection{With Edge Duplication}
For the lower bound, we reduce from \threshP. For each $j \in [n]$, $G$ contains a vertex $v_j$. $G$ also contains a special vertex $u$. The total number of vertices in $G$ is $n+1$. 

Given an input $\{X_1, \ldots, X_k\} \sim \zeta$ for \threshP, the $k$ sites create a graph $G$ for cycle-freeness as follows. Each $P_i$ creates an edge $(u, v_j)$ for each $X_{i,j} = 1$. In addition, $P_1$ reconstructs $Y$, picks an arbitrary set of $\bar{Y} \subset [n] \backslash Y$ of size $(n+1)/4$, and creates an arbitrary perfect matching between $Y$ and $\bar{Y}$. Let $\sigma_{C_2}$ be the resulting input distribution of $G$. By Lemma~\ref{lem:noise}, we have with probability $(1 - 1/k^{10})$ that all pairs $(u, v_j)$ for $j \in [n] \backslash Y$ are connected. It is easy to see from the reduction that if $\text{\threshP}(X_1, \ldots, X_k) = 1$, then there is a cycle of the form $s \to v_i \to v_j \to s$ for some $i \in [n] \backslash Y$ and $j \in Y$. Otherwise the graph is a forest. Therefore, 
$$\text{\threshP}(X_1, \ldots, X_k) = 0 \iff  G \text{ is cycle-free}.$$
Thus by Theorem~\ref{thm:thresh}, we have that $D^{1/k^4}_{\sigma_{C_2}}(\text{cycle-freeness}) = \Omega(kn)$.
\begin{corollary}
\label{cor:cycle-2}
$R^{1/3}(\text{cycle-freeness with edge duplication}) = \Omega(k n/\log k)$.
\end{corollary}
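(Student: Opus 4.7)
The plan is to observe that almost all of the work is already done: the reduction immediately preceding the corollary has established the distributional lower bound $D^{1/k^4}_{\sigma_{C_2}}(\text{cycle-freeness}) = \Omega(kn)$, and the remaining task is only to convert this into a lower bound on the randomized communication complexity $R^{1/3}$. Thus my proposal is essentially a routine invocation of Yao's Lemma together with the standard error-amplification bookkeeping described in Section~\ref{sec:prelim}.

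First, I would apply Yao's Lemma (Lemma~\ref{lem:Yao}) to conclude that $R^{1/k^4}(\text{cycle-freeness}) \ge D^{1/k^4}_{\sigma_{C_2}}(\text{cycle-freeness}) = \Omega(kn)$. Next, I would invoke the generic amplification argument from the paragraph following Lemma~\ref{lem:Yao}: since cycle-freeness has a Boolean output, a $1/3$-error protocol can be boosted to a $\delta$-error protocol by running it $O(\log(1/\delta))$ times with independent randomness and taking the majority vote. Setting $\delta = 1/k^4$ yields $R^{1/3}(\text{cycle-freeness}) = \Omega(R^{1/k^4}(\text{cycle-freeness})/\log k) = \Omega(kn/\log k)$, which is the desired bound.

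I should also double-check the preconditions for the reduction. The reduction from \threshP\ uses $n+1$ vertices and introduces a perfect matching between $Y$ and a chosen $\bar Y$, which does not depend on the sites' inputs (only on $P_1$'s reconstruction of $Y$, which Section~\ref{sec:graph} notes can be done for free up to an $O(1/k^{10})$ additive error and $O(n \log n)$ extra communication). The mild hypothesis $c_k \log n \le k \le \min\{n,m\}$ is in force, so Lemma~\ref{lem:noise} applies and the event that every $(u,v_j)$ with $j \in [n]\setminus Y$ is present holds except with probability $1/k^{10}$; on this event the reduction exactly turns a \threshP\ instance into a cycle-freeness instance, so a $1/k^4$-error cycle-freeness protocol yields a $1/k^4 + O(1/k^{10}) < 1/k^3$-error protocol for \threshP, which is the correctness level required by Theorem~\ref{thm:thresh}.

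There is no real obstacle here: the genuine content is the construction of $\sigma_{C_2}$ and the application of Theorem~\ref{thm:thresh}, both of which precede the corollary. The only thing to be careful about when writing the proof is to state clearly that the $\log k$ loss is exactly the price of going from error $1/k^4$ to error $1/3$ via Yao's Lemma and the majority-vote amplification, rather than any slack in the reduction itself.
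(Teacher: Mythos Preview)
Your proposal is correct and matches the paper's approach exactly: the paper does not give a separate proof of this corollary, implicitly relying (as with Corollaries~\ref{cor:F0} and~\ref{cor:l-infty}) on Yao's Lemma and the amplification discussion following it to pass from the established $D^{1/k^4}_{\sigma_{C_2}}(\text{cycle-freeness}) = \Omega(kn)$ to $R^{1/3}(\text{cycle-freeness}) = \Omega(kn/\log k)$.
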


There is again a trivial $O(kn\log n)$ upper bound. Each site first checks its local graph and reports directly if a cycle is found, otherwise the site sends all its edges (there are no more than $n-1$ such edges for a cycle-free graph) to $P_1$. Finally $P_1$ checks the cycle-freeness on the union of those edges.

\subsection{Connectivity and \#CC}
\label{sec:conn}
In the connectivity problem, the $k$ sites want to check whether $G$ is connected. In the \#connected-components (\#CC) problem, the $k$ sites want to compute the number of connected components in $G$. Note that solving \#CC also solves connectivity, thus we only show the lower bound for connectivity. 

\subsubsection{Without Edge Duplication}
For the lower bound, we reduce from \threshM. For each $i \in [k]$, $G$ contains a vertex $u_i$; and for each $j \in [r]$, $G$ contains a vertex $v_j$. The total number of vertices in $G$ is $n + k \le 2n$. Given an input $\{X_1, \ldots, X_k\} \sim \zeta$ for \threshR, the $k$ sites create a graph $G$ for connectivity as follows. Each $P_i$ creates an edge $(u_i, v_j)$ for each $X_{i,j} = 1$. In addition, $P_1$ reconstructs $Y$, and then creates a path containing $\{v_j\ |\ j \in Y\}$ and a path containing $\{v_j\ |\ j \in [r] \backslash Y\}$. See Figure~\ref{fig:conn} for an illustration. Let $\sigma_{N_1}$ be the resulting input distribution of $G$. It is easy to see from the reduction that 
$$\text{\threshR}(X_1, \ldots, X_k) = 1 \iff  G \text{ is connected}.$$
Thus by Theorem~\ref{thm:thresh}, we have that $D^{1/k^4}_{\sigma_{N_1}}(\text{connectivity}) = \Omega(kr)$.

\begin{corollary}
\label{cor:conn}
$R^{1/3}(\text{connectivity without edge duplication}) = \Omega(k r/\log k)$.
\end{corollary}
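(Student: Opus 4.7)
The plan is to derive this corollary directly from the distributional lower bound $D^{1/k^4}_{\sigma_{N_1}}(\text{connectivity}) = \Omega(kr)$ that was established in the paragraph immediately preceding the corollary, together with two tools from the preliminaries: Yao's Lemma (Lemma~\ref{lem:Yao}) and the standard error-amplification argument for boolean-output problems.

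The first step is to apply Yao's Lemma: for every function $f$ and every $\delta \in (0,1)$, we have $R^{\delta}(f) \ge \max_\mu D^{\delta}_\mu(f)$. Plugging in $f$ = connectivity, $\mu = \sigma_{N_1}$ (which, by construction, places all mass on instances with no duplicated edges: each edge $(u_i,v_j)$ arises from the unique site $P_i$, and the two paths on $\{v_j : j \in Y\}$ and $\{v_j : j \in [r]\setminus Y\}$ are added only by $P_1$), and $\delta = 1/k^4$ yields $R^{1/k^4}(\text{connectivity without edge duplication}) = \Omega(kr)$.

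The second step is to reduce the error probability from $1/k^4$ to the standard $1/3$. Because connectivity has a single-bit output, we can invoke the majority-vote boosting recorded in Section~\ref{sec:prelim}: any $1/3$-error protocol can be amplified to a $\delta$-error protocol by $O(\log(1/\delta))$ independent repetitions plus a Chernoff bound, which gives the general inequality $R^{1/3}(f) = \Omega(R^{\delta}(f)/\log(1/\delta))$. Taking $\delta = 1/k^4$, so that $\log(1/\delta) = O(\log k)$, the previous bound becomes $R^{1/3}(\text{connectivity without edge duplication}) = \Omega(kr/\log k)$, which is exactly the corollary.

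I do not anticipate any real obstacle: both ingredients are used as black boxes. The only sanity check worth flagging is that the reduction really produces a no-edge-duplication instance (verified above) and that the output is genuinely boolean so that majority amplification applies — both are clear from the construction of $\sigma_{N_1}$.
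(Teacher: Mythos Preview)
Your proposal is correct and matches the paper's intended argument exactly: the corollary is obtained from the preceding distributional bound $D^{1/k^4}_{\sigma_{N_1}}(\text{connectivity}) = \Omega(kr)$ via Yao's Lemma together with the majority-vote amplification from Section~\ref{sec:prelim}, losing only an $O(\log k)$ factor. Your additional verification that $\sigma_{N_1}$ is supported on duplicate-free instances is a welcome sanity check.
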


\begin{figure}[t]
\centering
\includegraphics[height = 2.0in]{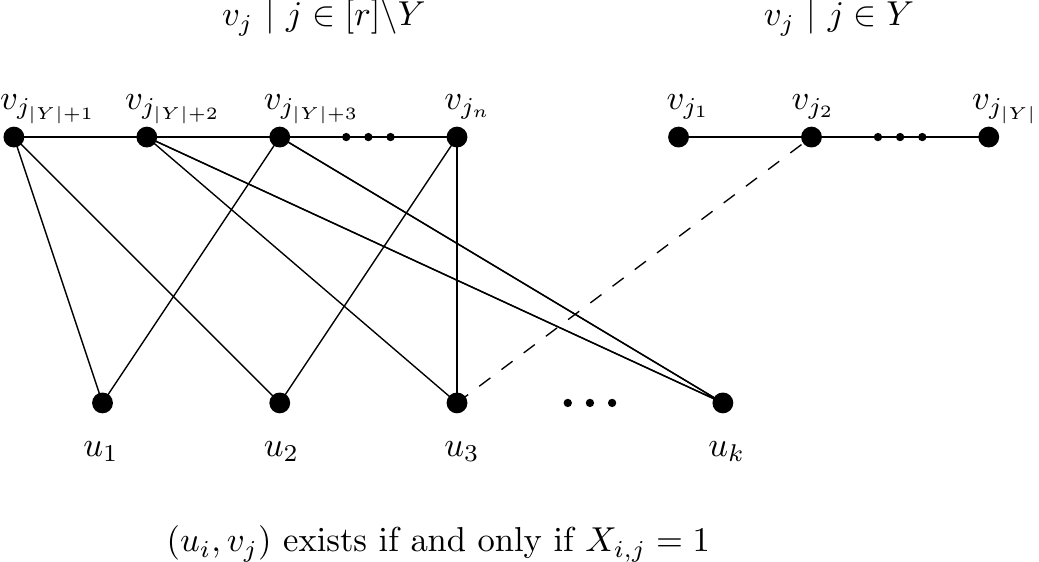}
\caption{Graph $G$ in the reduction for connectivity.}
\label{fig:conn}
\end{figure}

For the upper bound, all connected components (thus \#CC and connectivity) can be found by the protocol in which all sites send their local spanning trees to the first site $P_1$ and then $P_1$ does a local computation, which costs $O(kr \log n)$ bits of communication. 

\subsubsection{With Edge Duplication}
For the lower bound, we use a slightly modified reduction of the one for the without edge duplication case. We reduce from \threshP. For each $j \in [n]$, $G$ contains a vertex $v_j$. $G$ also contains a special vertex $u$. The total number of vertices in $G$ is $n + 1$. Given an input $\{X_1, \ldots, X_k\} \sim \zeta$ for \threshP, the $k$ sites create a graph $G$ for connectivity as follows. Each $P_i$ creates an edge $(u, v_j)$ for each $X_{i,j} = 1$. In addition, $P_1$ reconstructs $Y$, and then creates a path containing $\{v_j\ |\ j \in Y\}$ and a path containing $\{v_j\ |\ j \in [n] \backslash Y\}$. The total number of edges is $O(n) = O(m)$. This graph can be seen as the graph we constructed for the without edge duplication case after merging $u_1, \ldots, u_k$ to a single vertex $u$ while maintaining all the adjacent edges. Let $\sigma_{N_2}$ be the resulting input distribution of $G$. The correctness of the reduction is the same as before, that is, $\text{\threshP}(X_1, \ldots, X_k) = 1$ if and only if  $G \text{ is connected}$.
Thus by Theorem~\ref{thm:thresh}, we have that $D^{1/k^4}_{\sigma_{N_2}}(\text{connectivity}) = \Omega(kn)$.

\begin{corollary}
\label{cor:conn-2}
$R^{1/3}(\text{connectivity with edge duplication}) = \Omega(k n/\log k)$.
\end{corollary}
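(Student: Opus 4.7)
The plan is to invoke the two standard tools assembled in Section~\ref{sec:prelim} to convert the distributional lower bound $D^{1/k^4}_{\sigma_{N_2}}(\text{connectivity}) = \Omega(kn)$, which has just been established via the reduction from \threshP, into a randomized communication complexity bound. This is exactly the same translation used to pass from the distributional bound on $F_0$ to Corollary~\ref{cor:F0}, and to pass from the bound on $\ell_\infty$ to Corollary~\ref{cor:l-infty}.

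Concretely, I would proceed in two steps. First, apply Yao's Lemma (Lemma~\ref{lem:Yao}), which yields
\[
R^{1/k^4}(\text{connectivity}) \;\geq\; D^{1/k^4}_{\sigma_{N_2}}(\text{connectivity}) \;=\; \Omega(kn).
\]
Second, invoke the amplification observation recorded right after Lemma~\ref{lem:Yao}: for any Boolean-valued function $f$ and any $\delta \in (0,1/3]$, one has $R^{1/3}(f) = \Omega(R^{\delta}(f)/\log(1/\delta))$, because a constant-error protocol for a $\{0,1\}$-valued problem can be boosted to error $\delta$ by running it $O(\log(1/\delta))$ times with independent randomness and outputting the majority, with correctness guaranteed by a Chernoff bound. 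Since connectivity outputs YES or NO, this applies directly. Substituting $\delta = 1/k^4$, so that $\log(1/\delta) = 4\log k = \Theta(\log k)$, gives
\[
R^{1/3}(\text{connectivity}) \;=\; \Omega\!\left(R^{1/k^4}(\text{connectivity})/\log k\right) \;=\; \Omega(kn/\log k),
\]
which is the claimed bound.

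There is essentially no obstacle to overcome here: all of the real content lies in the reduction preceding the corollary, which embeds \threshP\ into a connectivity instance with $n+1$ vertices and $O(n) = O(m)$ edges, relying on the fact that $P_1$ can reconstruct $Y$ from the inputs essentially for free. The only assumptions needing verification are those inherited from Theorem~\ref{thm:thresh}, namely $c_k \log n \leq k \leq n$; these are guaranteed by the mild global assumption stated at the beginning of Section~\ref{sec:graph}. One should also note, for completeness, that the extra $O(r \log r) = O(n \log n)$ bits and $O(1/k^{10})$ additional error introduced by the reconstruction of $Y$ are absorbed into the $\Omega(kn)$ bound and the $1/k^4$ error budget respectively, and thus do not affect the final asymptotic statement.
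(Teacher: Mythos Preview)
Your proposal is correct and matches the paper's approach exactly: the paper derives the corollary from the preceding distributional bound $D^{1/k^4}_{\sigma_{N_2}}(\text{connectivity}) = \Omega(kn)$ via Yao's Lemma (Lemma~\ref{lem:Yao}) together with the error-amplification discussion that follows it, losing only the $\log k$ factor from boosting error $1/3$ down to $1/k^4$. Your write-up simply spells out those two steps explicitly.
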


The upper bound is the same as the without edge duplication case, and the cost is $O(kn\log n)$ bits (note that since we allow edge duplication here, the total number of edges of the $k$ spanning trees cannot be bounded by $O(m)$).

\subsection{Bipartiteness}
\label{sec:bipartiteness}
In the bipartiteness problem, the $k$ sites want to check whether $G$ is bipartite. 

\subsubsection{Without Edge Duplication}
\label{sec:bipartite-without}
For the lower Bound, we reduce from \threshM. Given an input $\{X_1, \ldots, X_k\} \sim \zeta$ for \threshR, the $k$ sites create a graph $G = (V, E)$ with $V = A \cup B \cup C$ where $A = \{a_1, \ldots, a_r\}$, $B = \{b_1, \ldots, b_r\}$ and $C = \{c_1, \ldots, c_k\}$, as follows: each $P_i$ creates an edge $(c_i, b_j)$ for each $X_{i,j} = 1$. In addition, $P_1$ does the following:
\begin{enumerate}
\item Creates an edge $(a_i, b_i)$ for each $i \in [r]$.
\item Reconstructs $Y$. For each $i \in [k]$ and $j \in Y$, creates an edge $(c_i, a_j)$.
\end{enumerate}
The total number of vertices of $G$ is $2r+k < 3n$.
Let $\sigma_{B_1}$ be the resulting input distribution of $G$. One can see from the reduction that if $\text{\threshR}(X_1, \ldots, X_k) = 1$, then there exists at least one triangle in the form of $(a_i, b_i, c_j)$ for some $i \in [r], j \in [k]$. Otherwise if 
$$\text{\threshR}(X_1, \ldots, X_k) = 0,$$ then all edges are between two vertex sets $\{a_i\ |\ i \in [r] \backslash Y\} \cup  C \cup \{b_i\ |\ i \in Y\}$ and $\{b_i\ |\ i \in [r] \backslash Y\} \cup \{a_i\ |\ i \in Y\}$ and consequently $G$ is a bipartite graph. Therefore, $$\text{\threshR}(X_1, \ldots, X_k) = 0 \iff  G \text{ is bipartite}.$$
Thus by Theorem~\ref{thm:thresh}, we have that $D^{1/k^4}_{\sigma_{B_1}}(\text{bipartiteness}) = \Omega(kr)$.

\begin{corollary}
\label{cor:bipartiteness}
$R^{1/3}(\text{bipartiteness without edge duplication}) = \Omega(k r/\log k)$.
\end{corollary}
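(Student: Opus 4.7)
The plan is to derive this corollary by combining the distributional lower bound $D^{1/k^4}_{\sigma_{B_1}}(\text{bipartiteness}) = \Omega(kr)$ just established in the reduction above with Yao's minimax principle together with the standard error-reduction argument, exactly mirroring the derivations of Corollaries~\ref{cor:F0}, \ref{cor:l-infty}, \ref{cor:cycle-2}, and \ref{cor:conn-2}.

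First I would apply Yao's Lemma (Lemma~\ref{lem:Yao}) with $\delta = 1/k^4$ and $\mu = \sigma_{B_1}$ to conclude $R^{1/k^4}(\text{bipartiteness}) \geq D^{1/k^4}_{\sigma_{B_1}}(\text{bipartiteness}) = \Omega(kr)$. This upgrades the distributional bound to a randomized bound, but at the non-standard error parameter $1/k^4$ in place of the conventional $1/3$.

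Second, I would invoke the boosting observation recorded in the paragraph following Lemma~\ref{lem:Yao}: because bipartiteness is a decision problem, one can drive the error probability of any randomized protocol from $1/3$ down to $1/k^4$ by running it $O(\log k)$ times with independent private randomness and taking a majority vote, with correctness guaranteed by a Chernoff bound. This gives the general inequality $R^{1/3}(f) = \Omega(R^{1/k^4}(f)/\log k)$, and substituting the randomized bound from the previous step yields $R^{1/3}(\text{bipartiteness}) = \Omega(kr/\log k)$, as claimed.

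I do not foresee any real obstacle: the substantive work is the reduction from \threshR\ already carried out in the preceding paragraphs (the construction of $G$ on $A \cup B \cup C$, and the verification that $G$ is bipartite if and only if $\text{\threshR}(X_1,\ldots,X_k) = 0$), and the argument above is just the mechanical wrapping used throughout the paper. The only item I would double-check is that the hypothesis $c_k \log r \leq k \leq r$ of Theorem~\ref{thm:thresh} is in force for the parameter $r = \min\{n, m/k\}$ used in the reduction, so that the invoked distributional lower bound is truly available; this reduces to the standing assumption $c_k \log n \leq k \leq \min\{n, m\}$ together with the definition of $r$.
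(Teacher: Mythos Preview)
Your proposal is correct and matches the paper's approach exactly: the paper derives this corollary (like the earlier ones) simply by invoking Yao's Lemma (Lemma~\ref{lem:Yao}) together with the error-reduction discussion following it, applied to the distributional bound $D^{1/k^4}_{\sigma_{B_1}}(\text{bipartiteness}) = \Omega(kr)$ obtained from the reduction. Your added remark about verifying the hypothesis $c_k\log r \le k \le r$ of Theorem~\ref{thm:thresh} is a reasonable sanity check that the paper leaves implicit.
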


For the upper bound, we can assume that the graph is connected, since otherwise we can first compute all connected components (which costs $O(kn \log n)$ bits of communication as mentioned in Section~\ref{sec:conn}), and then work on each connected component. The protocol works as follows: the first site $P_1$ chooses an arbitrary vertex $u$ in the graph, and grows a breadth-first-search (BFS) tree rooted at $u$ by communicating with the other $k-1$ sites. In the first round, $P_1$ asks each site to report the vertices adjacent to $u$ using its local edges. The communication is at most $O(|N(u)| \log n \cdot k)$ bits, where $N(u)$ denotes the set of neighbors of $u$, and $|N(u)|$ denotes the number of (distinct) neighbors of $u$. From this, $P_1$ computes the entire set $N(u)$ of neighbors of $u$, without duplication, and sends it to the other $k-1$ sites. This also takes $O(|N(u)| \log n \cdot k)$ bits of communication. Now the sites all know $N(u)$, and they can build the first layer of the BFS tree rooted at $u$. Next, $P_1$ picks the first child $v$ (according to an arbitrary but fixed order) of $u$, and repeats this process on $v$. If ever a site finds an odd cycle, it is announced to $P_1$. Notice that every vertex is sent at most $k$ times to $P_1$, meanwhile the total number of vertices sent is no more than $O(m)$, so the total communication is $O(kr\log n)$ bits. 

\begin{remark}
We notice that if a graph is node-partitioned among the $k$ sites, that is, each node is stored at one site together with all its adjacent edges, then we can directly implement the algorithm for connectivity in \cite{AGM12a} in the message-passing model using $\tilde{O}(k + n)$ bits of communication. This shows a sharp difference between node-partition and edge-partition for connectivity with respect to the input storage.
\end{remark}

\subsubsection{With Edge Duplication}
We reduce from \threshP. The reduction is a simple modification of the one for the without edge duplication case. Given an input $\{X_1, \ldots, X_k\} \sim \zeta$ for \threshP, the $k$ sites create a graph $G = (V, E)$ with $V = A \cup B \cup C$ where $A = \{a_1, \ldots, a_n\}$, $B = \{b_1, \ldots, b_n\}$ and $C = \{c\}$, as follows: each $P_i$ creates an edge $(c, b_j)$ for each $X_{i,j} = 1$. In addition, $P_1$ creates an edge $(a_i, b_i)$ for each $i \in [n]$, reconstructs $Y$, and for each $j \in Y$ creates an edge $(c, a_j)$.
The total number of vertices of $G$ is $2n+1$.
Let $\sigma_{B_2}$ be the resulting input distribution of $G$. The correctness of the reduction is similar as before, that is, $\text{\threshP}(X_1, \ldots, X_k) = 0$ if and only if $G \text{ is bipartite}$. Thus by Theorem~\ref{thm:thresh}, we have that $D^{1/k^4}_{\sigma_{B_2}}(\text{bipartiteness}) = \Omega(kn)$.

\begin{corollary}
\label{cor:bipartiteness-2}
$R^{1/3}(\text{bipartiteness with edge duplication}) = \Omega(kn/\log k)$.
\end{corollary}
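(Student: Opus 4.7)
The plan is to invoke Yao's Lemma on the distributional lower bound $D^{1/k^4}_{\sigma_{B_2}}(\text{bipartiteness}) = \Omega(kn)$ produced by the reduction sketched just above the statement, and then boost the error probability from $1/k^4$ up to $1/3$, losing a $\log k$ factor. This mirrors exactly the route used for Corollary~\ref{cor:bipartiteness}, Corollary~\ref{cor:conn-2}, and the other graph corollaries in this section.

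The first step is to verify the reduction rigorously. Conditioning on the probability $1 - 1/k^{10}$ event of Lemma~\ref{lem:noise} --- under which every $j \in [n] \setminus Y$ satisfies $\bigvee_i X_{i,j} = 1$ --- I would exhibit an explicit bipartition in the THRESH $= 0$ case: place $\{c\} \cup \{a_j : j \notin Y\} \cup \{b_j : j \in Y\}$ on one side and $\{a_j : j \in Y\} \cup \{b_j : j \notin Y\}$ on the other. The ``spine'' edges $(a_i,b_i)$ created by $P_1$ cross this partition by construction, and all edges incident to $c$ --- namely $(c,a_j)$ for $j \in Y$ and $(c,b_j)$ for $j \in [n] \setminus Y$, the only edges $c$ has since columns in $Y$ are all-zero when THRESH $= 0$ --- also cross it, so $G$ is bipartite. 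In the THRESH $= 1$ case there is some $j^\star \in Y$ with $X_{i,j^\star} = 1$ for some $i$, so the edges $(c,a_{j^\star})$, $(a_{j^\star},b_{j^\star})$, $(c,b_{j^\star})$ form a triangle and $G$ is not bipartite. Thus the reduction is correct up to the $1/k^{10}$ slack, and the $O(n\log n)$ extra communication required for $P_1$ to reconstruct $Y$ (from Section~\ref{sec:graph}'s preamble) is dominated by $\Omega(kn)$.

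Combining the reduction's correctness with Theorem~\ref{thm:thresh} gives $D^{1/k^4}_{\sigma_{B_2}}(\text{bipartiteness}) = \Omega(kn)$. Yao's Lemma (Lemma~\ref{lem:Yao}) then yields $R^{1/k^4}(\text{bipartiteness}) = \Omega(kn)$, and the boosting argument from the Preliminaries --- run a hypothetical $1/3$-error protocol $\Theta(\log k)$ times with independent coins and output the majority, which by Chernoff achieves error at most $e^{-\Omega(\log k)} \le 1/k^4$ --- implies $R^{1/3}(\text{bipartiteness}) = \Omega(R^{1/k^4}(\text{bipartiteness})/\log k) = \Omega(kn/\log k)$. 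I do not expect any serious obstacle; the only point requiring care is the bookkeeping of the $O(n \log n)$ bits and $O(1/k^{10})$ error introduced by reconstructing $Y$, but both are absorbed straightforwardly into the asymptotic bound and the error budget, exactly as in the with-duplication cases for cycle-freeness, connectivity, and \#CC treated earlier in this section.
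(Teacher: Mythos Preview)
Your proposal is correct and follows exactly the approach the paper uses (and spells out more explicitly for the analogous corollaries such as Corollary~\ref{cor:F0} and Corollary~\ref{cor:bipartiteness}): establish the distributional bound $D^{1/k^4}_{\sigma_{B_2}}(\text{bipartiteness}) = \Omega(kn)$ via the reduction from \threshP\ and Theorem~\ref{thm:thresh}, then apply Yao's Lemma together with the $\Theta(\log k)$-repetition boosting argument from Section~\ref{sec:prelim} to obtain $R^{1/3} = \Omega(kn/\log k)$. Your explicit bipartition in the THRESH $=0$ case matches the one the paper gives in Section~\ref{sec:bipartite-without} (specialized to $C=\{c\}$), and your handling of the $O(n\log n)$ reconstruction cost and $O(1/k^{10})$ error is exactly as the paper treats it.
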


The upper bound is the same as the without edge duplication case, and the communication complexity is $O(kn\log n)$ bits. Note that since we have edge duplication here, the claim that ``the total number of vertices sent is no more than $O(m)$" does not hold.

\subsection{Triangle-freeness}
In the triangle-freeness problem, the $k$ sites want to check whether $G$ contains a triangle.

\subsubsection{Without Edge Duplication}
\label{sec:triangle-without}
An $O(m \log n)$ upper bound is the following: $P_2, \ldots, P_k$  send all their edges to $P_1$ and then $P_1$ does a local check.

There is an $\Omega(m)$ bit lower bound on the communication which holds even when $k = 2$, by a reduction from \DISJE. Suppose $P_1$ holds $X$ and $P_2$ holds $Y$, where $\{X, Y\} \sim \tau_{1/4}$ is the hard input distribution for \DISJE. Sites $P_1$ and $P_2$ construct a graph $G = (V, E)$ with $V = A \cup B \cup C$ where $A = \{a_1, \ldots, a_n\}$, $B = \{b_1, \ldots, b_n\}$ and  $C = \{c_1, \ldots, c_n\}$ as follows: for each $i \in X$, $P_1$ creates an edge $(a_p, c_q)$ such that $(p-1)n + q = i\ (p, q \in [n])$ (note that the solution of $(p, q)$ is unique). He/she also creates an edge $(a_t, b_t)$ for all $t \in [n]$. Similarly, for each $i \in Y$, $P_2$ creates an edge $(b_p, c_q)$ such that $(p-1)n + q = i\ (p, q \in [n])$. The graph $G$ has $3n$ vertices and $O(m)$ edges. See Figure~\ref{fig:triangle} for an illustration. Let $\sigma_{T_1}$ be the resulting input distribution of $G$.
One can see from the reduction that if $X \cap Y \neq  \emptyset$, then there is a triangle in the form of $(a_p, b_p, c_q)$ for some $p, q \in [n]$. Otherwise the graph is triangle-free. Therefore,
$$\text{\DISJE}(X, Y) = 0 \iff G \text{ is triangle free}.$$ Therefore by Theorem~\ref{thm:DISJ} we have $D^{1/400}_{\sigma_{T_1}}(\text{triangle-freeness}) = \Omega(m)$.

\begin{corollary}
\label{cor:triangle}
$R^{1/3}(\text{triangle-freeness without edge duplication}) = \Omega(m)$.
\end{corollary}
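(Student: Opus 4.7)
The plan is to lift the distributional lower bound $D^{1/400}_{\sigma_{T_1}}(\text{triangle-freeness}) = \Omega(m)$, which was just established via the reduction from \DISJE\ (mapping index $i \in [n^2]$ bijectively to a pair $(p,q)$ with $(p-1)n+q=i$ and placing Alice's edges on the $A$--$C$ side and Bob's on the $B$--$C$ side, with the fixed $(a_t,b_t)$ matching), to a randomized lower bound. Since the reduction uses only two sites (a special case of the message-passing model), every randomized protocol for triangle-freeness on $k$ sites induces a randomized protocol for \DISJE\ on $P_1,P_2$ with the same communication and error, so it suffices to lower bound the two-party randomized complexity.

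The key steps, in order: (i) invoke Yao's lemma (Lemma~\ref{lem:Yao}) with the distribution $\sigma_{T_1}$ to conclude $R^{1/400}(\text{triangle-freeness}) \ge D^{1/400}_{\sigma_{T_1}}(\text{triangle-freeness}) = \Omega(m)$; (ii) convert the $1/400$-error bound into a $1/3$-error bound via standard error reduction for Boolean-valued problems — any protocol with error $1/3$ can be repeated $O(\log(1/\delta))$ times independently, with majority vote succeeding with error at most $\delta$ by a Chernoff bound, so $R^{\delta}(f) = O(R^{1/3}(f)\cdot \log(1/\delta))$ for any $\delta$. Taking $\delta = 1/400 = \Theta(1)$ loses only a constant factor, so $R^{1/3}(\text{triangle-freeness}) = \Omega(R^{1/400}(\text{triangle-freeness})) = \Omega(m)$. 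Crucially, because the error parameter in the distributional bound is a \emph{constant} (inherited from Razborov's $\Omega(r)$ bound at error $\beta/100 = 1/400$ in Theorem~\ref{thm:DISJ}) rather than $1/k^{O(1)}$, no $\log k$ factor is lost here — this is why the corollary is stated as the clean $\Omega(m)$ without the $1/\log k$ dilution that appeared in, e.g., Corollary~\ref{cor:bipartiteness}.

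There is essentially no real obstacle: the combinatorial heart of the argument, namely exhibiting a two-party hard instance of triangle-freeness that encodes \DISJE\ on a universe of size $m$ using only $O(m)$ edges with no edge duplication, is already done above the corollary. The only subtlety worth double-checking is that the error amplification is valid in the multi-party message-passing model — this is immediate because each site has private randomness, the $O(1)$ repetitions can use fresh private random bits, and only the final designated site needs to take the majority vote of locally learned answers (or equivalently, the coordinator-style simulation preserves this). With that verified, concatenating the reduction, Yao's lemma, and constant-factor amplification yields the stated $\Omega(m)$ bound for $R^{1/3}$.
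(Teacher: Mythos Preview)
Your proposal is correct and takes essentially the same approach as the paper: the paper simply states the corollary immediately after establishing $D^{1/400}_{\sigma_{T_1}}(\text{triangle-freeness}) = \Omega(m)$, implicitly invoking Yao's Lemma and the error-amplification discussion following it in Section~\ref{sec:prelim}. Your observation that the constant error parameter $1/400$ (rather than $1/k^{O(1)}$) is what prevents the $1/\log k$ loss is exactly the point, and you have spelled out the mechanics more explicitly than the paper does.
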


\begin{figure}[t]
\centering
\includegraphics[height = 2.4in]{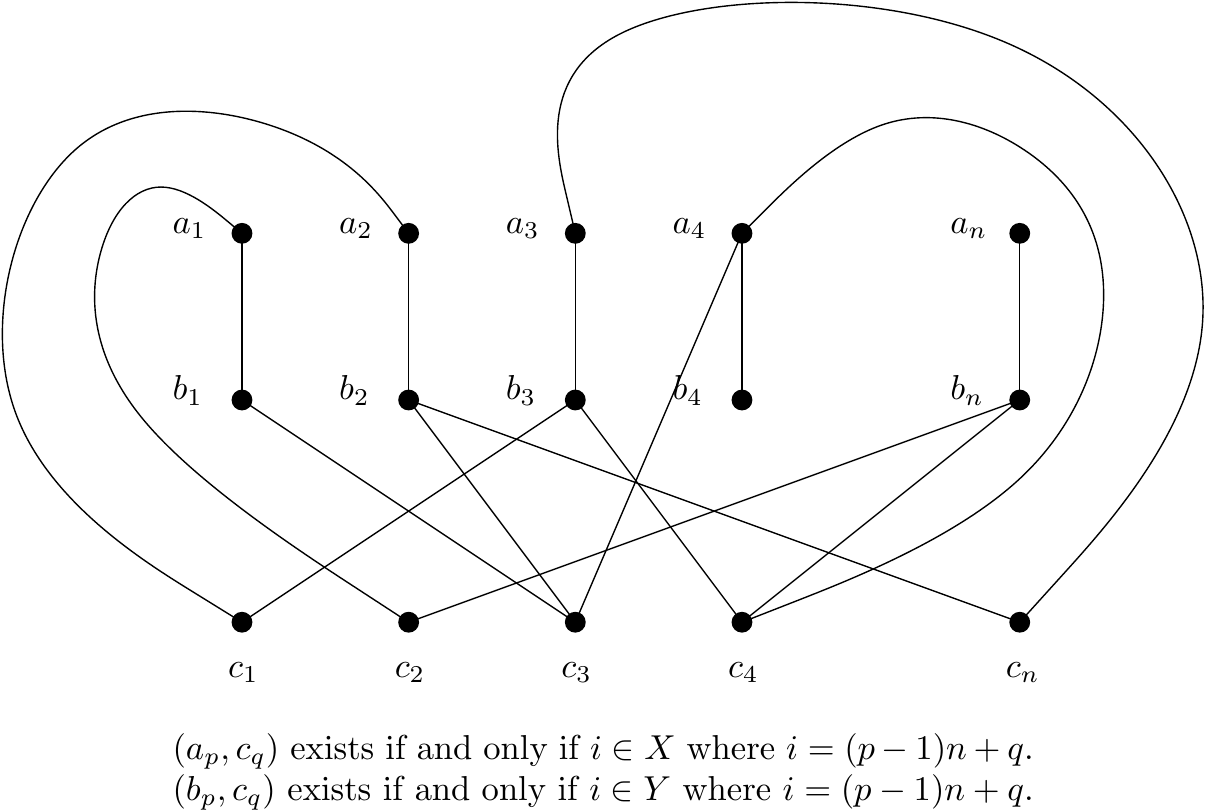}
\caption{Graph $G$ in the reduction for triangle-freeness.}
\label{fig:triangle}
\end{figure}

\subsubsection{With Edge Duplication}
We reduce from \threshE, and prove an $\Omega(km)$ lower bound on the communication cost. The reduction is an extension of the one for the without edge duplication case.

Given an input $\{X_1, \ldots, X_k\} \sim \zeta$ for \threshE, the $k$ sites create the following input graph $G = (V, E)$ for triangle-freeness with $V = A \cup B \cup C$ where $A = \{a_1, \ldots, a_n\}$, $B = \{b_1, \ldots, b_n\}$ and  $C = \{c_1, \ldots, c_n\}$. Each site $P_i$ does the following: for each $j \in [m]$ such that $X_{i,j} = 1$, the site creates an edge $(a_p, c_q)$ such that $(p-1)n + q = j\ (p, q \in [n])$. In addition, the first site $P_1$ also does the following.
\begin{enumerate}
\item
Creates an edge $(a_t, b_t)$ for each $t \in [n]$.
\item Reconstructs $Y$. For each $j \in Y$, creates an edge $(b_p, c_q)$ such that $(p-1)n + q = j\ (p, q \in [n])$.
\end{enumerate}
Let $\sigma_{T_2}$ be the resulting input distribution of $G$.
As before, it is easy to see that if $\text{\threshE}(X_1, \ldots, X_k) = 1$, then there is a triangle of the form $(a_p, b_p, c_q)$ for some $p, q \in [n]$. Otherwise the graph is triangle-free. Therefore,
$$\text{\threshE}(X_1, \ldots, X_k) = 0 \iff  G \text{ is triangle-free}.$$
By Theorem~\ref{thm:thresh}, we have that $D^{1/k^4}_{\sigma_{T_2}}(\text{triangle-freeness}) = \Omega(km)$.

\begin{corollary}
\label{cor:triangle-2}
$R^{1/3}(\text{triangle-freeness with edge duplication}) = \Omega(k m/\log k)$.
\end{corollary}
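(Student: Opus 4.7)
The plan is to obtain the randomized bound as a direct consequence of the distributional bound $D^{1/k^4}_{\sigma_{T_2}}(\text{triangle-freeness}) = \Omega(km)$ established in the preceding paragraph via the reduction from \threshE, combined with Yao's Lemma and the standard error-amplification argument spelled out in Section~\ref{sec:prelim}.

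First, I would invoke Lemma~\ref{lem:Yao} applied to the distribution $\sigma_{T_2}$ and error parameter $\delta = 1/k^4$: this immediately yields $R^{1/k^4}(\text{triangle-freeness}) \ge D^{1/k^4}_{\sigma_{T_2}}(\text{triangle-freeness}) = \Omega(km)$. The only subtlety is that the triangle-freeness problem is a boolean (YES/NO) decision problem, which is needed to apply the majority-vote error reduction later; this is the case here, so no additional care is required.

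Next, I would convert this bound at error $1/k^4$ into a bound at the standard error $1/3$. Given any $1/3$-error randomized protocol $\Pi$ for triangle-freeness using $C$ bits, run $\Pi$ independently $T = \Theta(\log k)$ times with fresh private randomness and output the majority of the $T$ outcomes. By the Chernoff bound stated in Section~\ref{sec:prelim}, the majority is correct except with probability at most $e^{-\Omega(T)} \le 1/k^4$ for a sufficiently large hidden constant. The resulting protocol uses $O(C \log k)$ bits, so $R^{1/k^4}(\text{triangle-freeness}) = O(R^{1/3}(\text{triangle-freeness}) \cdot \log k)$. Combining with the previous display gives $R^{1/3}(\text{triangle-freeness}) = \Omega(km/\log k)$, as required.

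There is no real obstacle here: the reduction itself was already carried out in the paragraph above, and the two remaining steps (Yao's minimax and the $\log k$-factor loss from error amplification) are purely mechanical applications of tools already established in the preliminaries. The proof is therefore a single-line invocation of Lemma~\ref{lem:Yao} together with the standard reduction from $1/3$ to $1/k^4$ error, exactly analogous to how Corollaries~\ref{cor:F0}, \ref{cor:l-infty}, \ref{cor:cycle-2}, \ref{cor:conn-2}, and \ref{cor:bipartiteness-2} were derived from their respective distributional lower bounds.
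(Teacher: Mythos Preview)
Your proposal is correct and follows exactly the paper's approach: the paper derives this corollary (like all the analogous corollaries you cite) by a one-line invocation of Yao's Lemma together with the error-amplification argument from Section~\ref{sec:prelim}, applied to the distributional bound $D^{1/k^4}_{\sigma_{T_2}}(\text{triangle-freeness}) = \Omega(km)$ established just before the corollary.
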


There is a simple protocol with $O(km \log n)$ bits of communication: each site sends all its edges to $P_1$ and the $P_1$ does a local check. 

We comment that our upper and lower bounds also applies to testing clique-freeness, that is, the $k$ sites want to check whether $G$ contains a clique of size $s$ for a fixed constant $s$.

\subsection{A Conjecture on the Diameter Problem and an Approximation Algorithm}
\label{sec:diameter}
We would like to mention the diameter problem which cannot be solved by the technique introduced in this paper. In the diameter problem, the $k$ sites want to compute the diameter of a graph $G = (V, E)$ in which the edges are distributed amongst the $k$ sites. We conjecture the following:

\begin{conjecture}
\label{con:diameter}
The randomized communication complexity of the diameter problem in the message-passing model is $\tilde{\Omega}(k m)$ bits, assuming edge duplication is allowed.
\end{conjecture}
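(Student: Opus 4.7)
The plan is to reduce from \orDISJ\ with $r = m$, which already yields an $\Omega(km)$ lower bound by Theorem~\ref{thm:orDISJ}. The remaining task is to design a graph gadget whose diameter takes distinct values in the two cases of \orDISJ.

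A candidate gadget on $m+3$ vertices is as follows. Introduce special vertices $s, t, h$ and an auxiliary vertex $v_j$ for each $j \in [m]$; let $P_1$ publicly add the hub edges $(h, v_j)$ for all $j \in [m]$. Each site $P_i$ inserts the edge $(v_j, t)$ whenever $X_{i,j}=1$, and $P_1$, after reconstructing $Y$ via the procedure used throughout Section~\ref{sec:graph}, inserts the edges $(s, v_j)$ for every $j \in Y$. Under the hard distribution $\zeta$ of Section~\ref{sec:thresh}, with high probability $\bigvee_i X_{i,j} = 1$ for every $j \notin Y$ and $\bigvee_i X_{i,j} = 0$ for every $j \in Y$ except possibly one ``witness'' column $j^\star$. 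A direct case analysis then shows that when $\orDISJ(X_1,\dots,X_k,Y) = 1$ the vertex $v_{j^\star}$ is adjacent to both $s$ and $t$, giving $d(s,t)=2$ and diameter exactly $3$ (since $(s,v_j)$ for $j \notin Y$ realizes distance $3$ and no pair is farther), while when $\orDISJ = 0$ every $s$--$t$ path must detour through $h$, giving $d(s,t)=4$ and diameter exactly $4$. Thus the diameter distinguishes the two cases of \orDISJ, and combining with Theorem~\ref{thm:orDISJ} and Yao's Lemma (Lemma~\ref{lem:Yao}) would yield $R^{1/3}(\text{diameter}) = \tilde{\Omega}(km)$.

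The main obstacle, and the likely reason the authors could not close the conjecture with the paper's existing techniques, is that the above gadget uses $\Theta(m)$ vertices, whereas the conjecture is stated for graphs on $n$ vertices with $m \geq n$ edges, so that $m$ can be as large as $\Theta(n^2)$. Compressing the $v_j$'s into $O(\sqrt{m})$ vertices via a pair encoding $j \mapsto (a_{p(j)}, c_{q(j)})$ causes the noise bits $j \notin Y$ with $\bigvee_i X_{i,j}=1$ to create many length-$2$ ``$a$--$c$'' shortcuts in the compressed gadget, and one must arrange that these shortcuts do not accidentally create a short $s$--$t$ path when $\orDISJ = 0$, nor inflate the diameter beyond the target value when $\orDISJ = 1$. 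Because diameter is a global $\max\min$ quantity and the noise density under $\zeta$ is $1-o(1)$, threading this needle without collapsing the diameter gap appears to require either a delicate new combinatorial construction calibrated to the precise noise profile of $\zeta$, or a fundamentally different argument --- for example, a direct information-complexity proof distributed across the $k$ sites --- that sidesteps the \thresh-based template of this paper entirely. The fact, discussed in Section~\ref{sec:diameter}, that even an additive-$2$ approximation saves a $\sqrt{n}$ factor is consistent with this difficulty: the exact diameter appears to depend on a delicately balanced combinatorial property that is not robust to small perturbations, precisely the kind of property for which simple reductions from \thresh\ seem to lose too much slack.
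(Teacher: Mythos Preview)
The statement is a \emph{conjecture}; the paper explicitly leaves it open and offers no proof. Your write-up correctly recognizes this and is best read as an analysis of why the paper's \thresh-based template does not settle the question, rather than as a completed proof.

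Your gadget on $m+3$ vertices is sound: the diameter is $3$ in the YES case of \orDISJ\ and $4$ in the NO case, so you recover an $\tilde\Omega(kn)$ bound in the regime $m=\Theta(n)$. (Note that this particular regime already follows more cheaply from the connectivity lower bound of Section~\ref{sec:conn}, since a graph has finite diameter iff it is connected.) Your diagnosis of the obstruction---that packing the $m$ auxiliary vertices into $O(\sqrt m)$ vertices via a pair encoding $j\mapsto(p,q)$ lets the dense noise edges from columns $j\notin Y$ create uncontrolled shortcuts and collapse the diameter gap---is exactly the point, and it matches the paper's own remark that any proof must separate diameter $T$ from $T{+}1$ for a fixed constant $T$, precisely the resolution at which the additive-$2$ spanner upper bound already shows the problem becomes easy.

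One small inconsistency: you announce a reduction from \orDISJ, where the coordinator already holds $Y$, yet invoke the ``reconstruct $Y$'' procedure from Section~\ref{sec:graph}, which is only needed when reducing from \threshR\ (where $Y$ has been discarded). Either route works, but they should not be mixed.
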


Note that the naive algorithm in which every site sends all of its edges to the first site will match this lower bound up to a logarithmic factor.

In \cite{DHZ00} an algorithm for constructing a graph spanner in the RAM model with an additive distortion $2$ is proposed. A graph spanner with an additive distortion $d$ preserves all pairwise distances of vertices in the original graph up to an additive error $d$. In particular, the diameter is preserved up to an additive error $d$ when saying all pairwise distances of vertices are preserved up to an additive error $d$. This algorithm can be easily implemented in the message-passing model with a communication complexity of $O(k n^{3/2} \log^2 n)$ bits.
This fact gives us a hint that in order to prove an $\tilde{\Omega}(k n^2)$ (in the case when $m = \Theta(n^2)$) lower bound, we have to explore the difficulty of distinguishing a diameter $T$ verses $T+1$ for a value $T \in [1, n - 2]$ (in fact, one can show that $T$ also needs to be a fixed constant). Now we briefly describe how to implement the algorithm in \cite{DHZ00} in the message-passing model. 

The algorithm in \cite{DHZ00} works as follows (in the RAM model).

\begin{enumerate}
\item We pick $\Theta(\sqrt{n} \log n)$ vertices in the graph uniformly at random, and grow a breadth-first-search (BFS) tree rooted on each of these vertices. We then include all edges of these BFS trees into the spanner.

\item We include all edges incident to vertices whose degrees are no more than $\sqrt{n}$.
\end{enumerate}
In \cite{DHZ00} it was shown that this algorithm computes a spanner with an additive distortion $2$ correctly with probability $0.99$. Now let us briefly discuss how to implement this algorithm in the message-passing model. For the first step, the random sampling can be done by $P_1$ locally, and then $P_1$ communicates with the other $k - 1$ sites to grow a BFS tree rooted on each of these vertices using the algorithm described in Section~\ref{sec:bipartiteness}. Recall that constructing each of these BFS trees costs $O(kn\log n)$ bits of communication. Thus the total communication needed for the first step is $O(kn^{3/2} \log^2 n)$ bits. For the second step, the $k$ sites first use the algorithm in \cite{CMY11} to compute the degree of each vertex (this is essentially $F_0$) up to a factor of $2$, using $O(kn\log n)$ bits of communication, and then they construct the set $H = \{v \in V\ |\ \text{degree}(v) \le 2\sqrt{n}\}$. Next, $P_2, \ldots, P_k$ send $P_1$ all edges that are incident to a vertex in $H$. Note that each of $P_i\ (i = 2, \ldots, k)$ will send at most $O(n^{3/2})$ edges to $P_1$. Therefore, the total communication cost of the second step is bounded by $O(kn \log n + kn^{3/2} \log n) = O(kn^{3/2} \log n)$ bits. We conclude with the following theorem for diameter.
\begin{theorem}
There exists a randomized protocol that approximates the diameter of a graph up to an additive error of $2$ in the message-passing model. The protocol succeeds with probability $0.99$ and uses $O(kn^{3/2} \log^2 n)$ bits of communication.
\end{theorem}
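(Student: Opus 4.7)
The plan is to prove the theorem by faithfully simulating the Dor--Halperin--Zwick $(+2)$-spanner construction of \cite{DHZ00} in the message-passing model, then having $P_1$ locally compute the diameter of the spanner. Since a $(+2)$-spanner preserves all pairwise distances up to an additive error of $2$, it in particular preserves the diameter up to an additive error of $2$, so correctness will reduce entirely to (a) correctness of the spanner construction (already proved in \cite{DHZ00} with success probability $0.99$) and (b) correctly simulating its two steps using bounded communication.

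For the first step, $P_1$ samples $s = \Theta(\sqrt{n}\log n)$ vertices uniformly at random using private randomness (no communication) and broadcasts them to the other $k-1$ sites in $O(k s \log n)$ bits. Then, for each sampled root, the sites grow a BFS tree of $G$ using the protocol from Section~\ref{sec:bipartiteness}, which costs $O(kn \log n)$ bits per tree. All BFS-tree edges are collected at $P_1$ (which already sees them during the BFS construction). The total cost of this step is $O(s \cdot kn \log n) = O(k n^{3/2} \log^2 n)$ bits.

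For the second step, the sites need to identify the set $H$ of low-degree vertices and then deliver all their incident edges to $P_1$. I would first have the sites compute a $2$-approximation of the degree of every vertex $v$ simultaneously, using the $(1+\eps)$-approximation $F_0$ protocol of \cite{CMY11} with $\eps = 1/2$ applied in parallel to each vertex (each site feeds its multiset of neighbors of $v$ into the $F_0$ sketch). This costs $O(k n \log n)$ bits in total. Using the approximation, the sites agree on $H := \{v : \widetilde{\deg}(v) \le 2\sqrt n\}$, which is contained in $\{v : \deg(v) \le 4 \sqrt n\}$ and contains $\{v : \deg(v) \le \sqrt n\}$, so it is still a valid choice for the DHZ construction (the analysis of \cite{DHZ00} is robust to constant-factor changes in the degree threshold). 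Next, each $P_i$ sends to $P_1$ all of its local edges incident to $H$. Since every such edge is incident to a vertex of (true) degree at most $4\sqrt n$, each site sends at most $4n \cdot \sqrt n = O(n^{3/2})$ edges, for a total of $O(k n^{3/2} \log n)$ bits.

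Finally, $P_1$ takes the union of the BFS-tree edges and the edges incident to $H$ that it has collected; by \cite{DHZ00} this is a $(+2)$-spanner of $G$ with probability at least $0.99$. $P_1$ then runs an exact diameter computation locally on the spanner (for free, since the model allows unbounded local computation) and outputs the answer, which differs from the true diameter of $G$ by at most $2$. Summing the two steps gives the claimed $O(k n^{3/2} \log^2 n)$ communication bound. The main subtlety I expect is step two: ensuring that running $F_0$-sketches in parallel for all $n$ vertices, together with the constant-factor slack in the degree threshold, still gives the guarantees needed by the DHZ analysis; this will require a union bound over the $n$ vertices and boosting the success probability of each $F_0$-estimate to $1 - 1/\poly(n)$, which only adds logarithmic factors already absorbed in $O(kn\log n)$.
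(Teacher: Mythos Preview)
Your proposal is correct and essentially identical to the paper's own argument: both implement the Dor--Halperin--Zwick $(+2)$-spanner by sampling $\Theta(\sqrt{n}\log n)$ roots and growing BFS trees via the bipartiteness protocol, then use the \cite{CMY11} $F_0$ protocol to $2$-approximate all degrees and ship low-degree edges to $P_1$, giving the same $O(kn^{3/2}\log^2 n)$ bound. Your added care about the union bound over the $n$ per-vertex $F_0$ estimates and the constant-factor slack in the degree threshold is a refinement the paper leaves implicit, but the approach is the same.
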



\section{Concluding Remarks}
\label{sec:discussion}

In this paper we show that exact computation of many basic statistical and graph problems in the message-passing model are necessarily communication-inefficient. An important message we want to deliver through these negative results, which is also the main motivation of this paper, is that a relaxation of the problem, such as an approximation, is necessary in the distributed setting if we want communication-efficient protocols. Besides approximation, the layout and the distribution of the input are also important factors for reducing communication.

An interesting future direction is to further investigate efficient communication protocols for approximately computing statistical and graph problems in the message-passing model, and to explore realistic distributions and layouts of the inputs. 

One question which we have not discussed in this paper but is important for practice, is whether we can obtain round-efficient protocols that (almost) match the lower bounds which hold even for round-inefficient protocols? Most simple protocols presented in this paper only need a constant number of rounds, except the ones for bipartiteness and (approximate) diameter, where we need to grow BFS trees which are inherently sequential (require $\Omega(\Delta)$ rounds where $\Delta$ is the diameter of the graph). Using the sketching algorithm in \cite{AGM12a}, we can obtain a $1$-round protocol for bipartiteness that uses $\tilde{O}(kn)$ bits of communication. We do not know whether a round-efficient protocol exists for the additive-$2$ approximate diameter problem that could (almost) match the $\tilde{O}(kn^{3/2})$ bits upper bound obtained by the round-inefficient protocol in Section~\ref{sec:diameter}.

\bibliographystyle{abbrv}
\bibliography{paper}

\appendix

\section{Proof for Lemma~\ref{lem:Yao}}
\label{sec:proof-Yao}

\begin{proof}
The original proof is for two players, though this also holds for $k > 2$ players since for any distribution $\mu$, if $\Pi$ is a $\delta$-error protocol then for all possible 
inputs $x^1, \ldots, x^k$ to the $k$ players,
$$\Pr_{\textrm{random tapes of the players}}[\Pi(x^1,\ldots, x^k) = f(x^1, \ldots, x^k)] \geq 1-\delta,$$
which implies for any distribution $\mu$ on $(x^1, \ldots, x^k)$ that
$$\Pr_{\textrm{random tapes of the players}, (x^1,\ldots, x^k) \sim \mu}[\Pi(x^1, \ldots, x^k) 
= f(x^1, \ldots, x^k)] \geq 1-\delta,$$
which implies there is a fixing of the random tapes of the players
so that $$\Pr_{(x^1, \ldots, x^k) \sim \mu}[\Pi(x^1, \ldots, x^k) 
= f(x^1, \ldots, x^k)] \geq 1-\delta,$$
which implies $D^{\delta}_{\mu}(f)$ is at most $R^{\delta}(f)$. 
\end{proof}

\end{document}